\documentclass[journal,twoside,web,hidelinks,pdflatex]{ieeecolor}
\pdfminorversion=4
\usepackage{generic}
\usepackage{cite}
\usepackage{amsmath,amssymb,amsfonts}

\usepackage{amsthm}
\usepackage{algorithmic}
\usepackage{graphicx}
\usepackage{adjustbox}
\usepackage{subcaption}
\usepackage{textcomp}
\usepackage{siunitx}
\usepackage{array} 
\usepackage{booktabs}
\usepackage{hyperref}
\usepackage{tikz}
\usetikzlibrary{arrows, arrows.meta, positioning, calc, math, patterns, patterns.meta,arrows.meta}
\usetikzlibrary{decorations.pathmorphing}
\usetikzlibrary{decorations.pathreplacing}

\markboth{\hskip25pc IEEE TRANSACTIONS AND JOURNALS TEMPLATE}
    {Maruta and Sugie: Fast Sampling for System Identification: Overcoming Noise, Offsets, and Closed-Loop Challenges with SVF}

\newcommand{\sR}{\mathbb{R}}
\newcommand{\sN}{\mathbb{N}}
\newcommand{\ths}{\theta^{\star}}
\newcommand{\Ps}{P^{\star}}
\newcommand{\thhat}{\hat{\theta}}
\newcommand{\vd}{v_{\mathrm{d}}}
\newcommand{\vc}{v_{\mathrm{c}}}
\newcommand{\zd}{z_{\mathrm{d}}}
\newcommand{\zc}{z_{\mathrm{c}}}
\newcommand{\Tf}{T_{\mathrm{f}}}
\newcommand{\uf}{u_{\mathrm{f}}}
\newcommand{\yf}{y_{\mathrm{f}}}
\newcommand{\ef}{e_{\mathrm{f}}}

\newtheorem{remark}{Remark}
\newtheorem{fact}{Fact}

\DeclareMathOperator*{\argmin}{arg\,min}

\DeclareMathOperator{\Cov}{Cov}
\DeclareMathOperator{\Tr}{Tr}
\DeclareMathOperator{\cond}{cond}

\newtheorem{lemma}{Lemma}

\newtheorem{theorem}{Theorem}

\newcommand{\fs}{f_{\mathrm{S}}}
\hyphenation{op-tical net-works semi-conduc-tor}

\begin{document}

\title{Fast Sampling for System Identification: Overcoming Noise, Offsets, and Closed-Loop Challenges with State Variable Filter}

\author{Ichiro Maruta, \IEEEmembership{Member, IEEE}, and Toshiharu Sugie, \IEEEmembership{Fellow, IEEE}
\thanks{This work has been submitted to the IEEE for possible publication. Copyright may be transferred without notice, after which this version may no longer be accessible.}
\thanks{This work was supported by JSPS KAKENHI Grant number 24K00908 and JST K Program Japan Grant Number JP-MJKP23G1.}
\thanks{Ichiro Maruta is with the Department of Aeronautics and Astronautics, Engineering,
        Kyoto University, Kyoto, 615-8540, Japan
        (e-mail: maruta@kuaero.kyoto-u.ac.jp).}
\thanks{Toshiharu Sugie is with the Department of Mechanical Engineering, Osaka University, 
Suita, Osaka 565-0871, Japan (e-mail: sugie@mech.eng.osaka-u.ac.jp).}}

\maketitle

\begin{abstract}
This paper investigates the effects of setting the sampling frequency significantly higher than conventional guidelines in system identification. 
Although continuous-time identification methods resolve the numerical difficulties encountered in discrete-time approaches when employing fast sampling (e.g., the problems caused by all poles approaching unity), the potential benefits of using sampling frequencies that far exceed traditional rules like the ``ten times the bandwidth'' guideline remained largely unexplored.
We show that using a state variable filter (SVF)-like least squares approach, the variance of the estimation error scales as $O(h)$ with the sampling interval $h$.
Importantly, this scaling holds even with colored noise or noise correlations between variables.
Thus, increasing the sampling frequency and applying the SVF method offers a novel solution for challenging problems such as closed-loop system identification and measurements with offsets.
Theoretical findings are supported by numerical examples, including the closed-loop identification of unstable multi-input multi-output (MIMO) systems. 
\end{abstract}

\begin{IEEEkeywords} %
  Continuous-Time Models, 
  Closed-Loop Identification, 
  Estimation Error Variance,
  Fast Sampling, 
  State Variable Filter, 
  System Identification, 
\end{IEEEkeywords}

\section{Introduction}
\IEEEPARstart{S}{ystem} identification, the process of modeling dynamical systems from observed input-output data, is fundamental for advanced control, monitoring, and fault diagnosis. 
A key question in designing experiments for system identification is how to choose the data sampling interval.
The classical guideline of ``ten times the bandwidth of the system'' \cite[p.~452]{Ljung:1999} has long served as a useful rule of thumb for practitioners. 
This heuristic can be traced back to the early work by \cite{Astrom:1969}.
The primary reason for not setting the sampling frequency high, as mentioned in \cite{Ljung:1999}, is a numerical problem arising from all poles of the discrete-time model approaching $1$ when the sampling frequency is increased. 
This problem is indeed severe, as demonstrated by previous attempts to address the problem using the delta operator approach \cite{Middleton:1986}. 
In particular, when a target system exhibits multiple dynamics with significantly different time constants, no single appropriate sampling interval may exist. 

One natural way to circumvent this problem is to apply continuous-time system identification, namely, obtain a continuous-time plant model directly from the sampled I/O data \cite{Garnier:2008,GARNIER:2012}.
In fact, this strategy works well. However, the benefits of choosing a sampling frequency that greatly exceeds conventional guidelines have not been analyzed in detail, and the quantitative relation between the identification performance and the sampling frequency is not clear at all.

Therefore, first, we quantitatively evaluate the benefits of further reducing the sampling interval in this paper. 
In doing so, we demonstrate that the variance of the parameter estimation error scales as $O(h)$ with respect to the sampling interval $h$ and emphasize that this property is independent of factors such as noise color or correlation with the input.

This improvement in performance could provide a new avenue for addressing system identification problems that are important but have traditionally been challenging. 
One such problem is closed-loop system identification.  
In fact, closed-loop system identification is a critical yet challenging task in many practical applications, such as networked control and large-scale interconnected systems, where feedback induces correlations between inputs and measurement noise \cite{VS95, FORSSELL:1999, Ljung:1999}. 
Within the prediction error framework, approaches are typically classified as direct, indirect, or joint input-output methods \cite{FORSSELL:1999}. 
Direct methods use only I/O data and can accommodate high-order or nonlinear controllers, but they require precise noise modeling \cite{FL00}. 
Indirect approaches—exemplified by the Dual-Youla method—transform the closed-loop problem into an open-loop one, though they depend on accurate controller knowledge and may yield higher-order models \cite{HFK89, AGV11, MS21}.
Joint input-output techniques attempt to identify both the plant and controller simultaneously, generally under linearity assumptions.
In parallel developments, within the subspace identification framework, some methods have been extended to closed-loop data \cite{LM96, Qin06, VdV13, Tanaka:2021}.
Variants include joint input-output subspace approaches \cite{Ver93, KKP05}, innovation estimation methods \cite{QL03}, and two-stage procedures \cite{OOF06, KT07, Oku:2021}.
Direct methods such as SSARX and PBSID overcome some limitations by first estimating high-order ARX models and then applying model reduction \cite{Jan03, CP05}. 
More recent work has investigated automatic model order selection via nuclear norm minimization—a convex relaxation for rank minimization \cite{FHB01, FPST13, LHV13, Smi14, VH16} --- although its heavy computational load and sensitivity to low signal-to-noise ratios remain concerns \cite{NW16}.
Despite these extensive efforts, closed-loop system identification continues to face fundamental challenges, particularly in handling unstable plants or nonlinear controllers under heavy noise circumstances. 

Hence, we give a simple least squares method that overcomes all of the above problems.
More precisely, we show that a well-known State Variable Filter (SVF) approach combined with fast sampling exhibits quite powerful performance for tough closed-loop identification, where both input and output are contaminated by noises/disturbances and signal-to-noise ratio could be quite low (e.g., less than \SI{0}{dB}).

This paper is organized as follows. 
In Section~\ref{sec:fast_sampling_effect}, we analyze the benefits of fast sampling theoretically. 
Section~\ref{sec:svf} describes the identification problem with its solution based on the SVF approach, along with its extension to multi-input multi-output (MIMO) systems. 
In Section~\ref{sec:num_exp}, we demonstrate the effect of the fast-sampling via numerical experiments covering various scenarios such as closed-loop operation, unstable plants, and systems with measurement offsets. 
Finally, Section~\ref{sec:conclusions} concludes the paper.

\section{Fast-Sampling Effect}\label{sec:fast_sampling_effect}

\subsection{Fundamental Analysis via Power Spectral Density}
Let us first examine the quantitative relationship between sampling frequency and signal-to-noise ratio (SNR) by analyzing how the power spectral density (PSD) of noise changes with different sampling rates.
Consider a noise signal $v(t)$ with PSD $P_v(f)$. 
When sampled at frequency $\fs$, the PSD of the sampled signal $\vd(k)$ is given by:
\begin{align}
P_{\vd}(f) = \sum_{n = -\infty}^{\infty} P_v(f - n \fs).
\end{align}
This equation shows that frequency components above the Nyquist frequency $\fs/2$ are aliased into the frequency range $[-\fs/2, \fs/2]$, becoming indistinguishable from the original low-frequency content.
To prevent this aliasing effect from degrading SNR, system identification applications typically employ anti-aliasing filters that eliminate frequency components above $\fs/2$ before sampling.
This practical solution has become so standard that the presence of an anti-aliasing filter is often assumed when discussing sampling frequency selection in system identification theory, leading to guidelines such as the ``ten times the bandwidth'' rule.

However, implementing ideal anti-aliasing filters is often impractical or impossible in many real-world applications.
For instance, when measuring state variables in mechanical systems through discrete position encoders, or obtaining measurements through periodic physical sampling in chemical processes, implementing pre-sampling filters becomes impractical.

In these situations where anti-aliasing filters are absent, high-frequency components are aliased into the frequency range of interest as illustrated in the top part of Fig.~\ref{fig:aliasing_reduction}. But, increasing the sampling frequency reduces the amount of aliased noise as shown in the middle and bottom parts of 
Fig.~\ref{fig:aliasing_reduction}. 
This demonstrates that faster sampling could potentially improve SNR in the frequency range of interest.

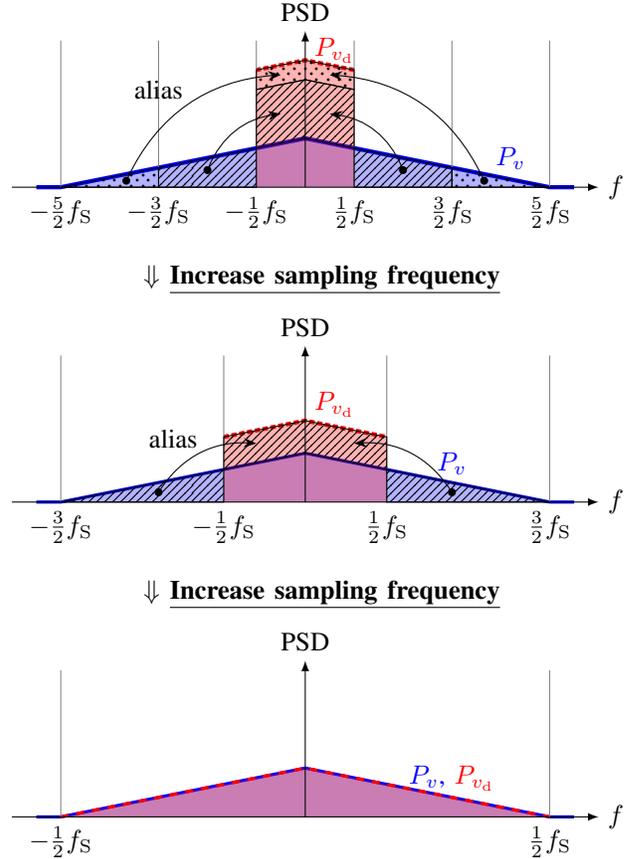
\begin{figure}[t]
    \centering
\begin{tikzpicture}[scale=0.65,domain=-5:5]%
\def\fsn{2}
\draw[ultra thick, blue, fill=blue, fill opacity = 0.3] (-5.5,0) -- (-5,0) -- (0, 1) -- (5,0) -- (5.5,0);
\node[blue] at (4.2,0.6) {$P_v$};
\draw[draw=none,fill=red, fill opacity = 0.3] (1,0) -- (1,2.4) -- (0, 2.6) -- (-1,2.4) -- (-1,0) ; 
\draw[ultra thick, red, dash pattern = on 2pt off 1pt] (1,2.4) -- (0, 2.6) -- (-1,2.4); 
\node[red] at (0.6,2.8) {$P_{\vd}$};
 
\draw[-latex] (-6,0) -- (6,0) node[right] {$f$};%
\draw[-latex] (0,0) -- (0,3.2) node[above] {PSD};%

\foreach \n in {-5,-3,-1,1,3,5} {
    \draw[very thin,color=gray] (\n*\fsn/2,0) -- (\n*\fsn/2,3); 
    \tikzmath{
      \nabs = int(abs(\n));
    }
    \ifthenelse{\n < 0}
        {\node[below] at (\n*\fsn/2,0) {$-\frac{\nabs}{2} \fs$};} 
        {\node[below] at (\n*\fsn/2,0) {$\frac{\n}{2} \fs$};}    
}

\draw[pattern=north east lines] (-3,0) -- (-1,0) -- (-1, 0.8) -- (-3,0.4) -- cycle; 
\draw[pattern=north east lines] (3,0) -- (1,0) -- (1, 0.8) -- (3,0.4) -- cycle; 
\draw[pattern=north east lines] (1,0.8) -- (1,2) -- (0, 2.2) -- (-1,2) -- (-1,0.8) -- (0,1) -- cycle; 

\draw[{Circle[length=3pt]}-{Stealth},shorten <=-1.5pt] (-2,0.7/2) to[bend left] (-0.5,1.5);
\draw[{Circle[length=3pt]}-{Stealth},shorten <=-1.5pt] (2,0.7/2) to[bend right] (0.5,1.5);
\node at (-3,2) {alias};

\draw[pattern={Dots[angle=45,distance={3pt}]}] (-5,0) -- (-3,0) -- (-3, 0.4) -- (-5,0) -- cycle; 
\draw[pattern={Dots[angle=45,distance={3pt}]}] (5,0) -- (3,0) -- (3, 0.4) -- (5,0) -- cycle; 
\draw[pattern={Dots[angle=45,distance={3pt}]}] (1,2) -- (1,2.4) -- (0, 2.6) -- (-1,2.4) -- (-1,2) -- (0,2.2) -- cycle; 

\draw[{Circle[length=3pt]}-{Stealth},shorten <=-1.5pt] (-3.66,0.4*1/3) to[bend left] (-0.5,2.3);
\draw[{Circle[length=3pt]}-{Stealth},shorten <=-1.5pt] (3.66,0.4*1/3) to[bend right] (0.5,2.3);
\end{tikzpicture}
\\[2mm]
\mbox{\textbf{$\Downarrow$ \underline{Increase sampling frequency}}}
\\[2mm]
\begin{tikzpicture}[scale=0.65,domain=-5:5]%
\def\fsn{5/1.5}
\draw[very thick, blue, fill=blue, fill opacity = 0.3] (-5.5,0) -- (-5,0) -- (0, 1) -- (5,0) -- (5.5,0);
\node[blue] at (3,0.8) {$P_v$};
\draw[draw=none,fill=red, fill opacity = 0.3] (-\fsn/2,0) -- (-\fsn/2,2/3+2/3) -- (0, 1+2/3) -- (\fsn/2,2/3+2/3) -- (\fsn/2,0); 
\draw[ultra thick, red, dash pattern = on 2pt off 1pt]  (-\fsn/2,2/3+2/3) -- (0, 1+2/3) -- (\fsn/2,2/3+2/3); 
\node[red] at (0.6,2) {$P_{\vd}$};

\draw[-latex] (-6,0) -- (6,0) node[right] {$f$};%
\draw[-latex] (0,0) -- (0,3.2) node[above] {PSD};%

\foreach \n in {-5,-5/3,5/3,5} {
    \draw[very thin,color=gray] (\n,0) -- (\n,3); %
    \tikzmath{
      \nabs = int(abs(\n*3.01/5));
      \nn = int(\n);
    }
    \ifthenelse{\nn < 0}
        {\node[below] at (\n,0) {$-\frac{\nabs}{2} \fs$};} %
        {\node[below] at (\n,0) {$\frac{\nabs}{2} \fs$};}    %
}

\draw[pattern=north east lines] (-\fsn*3/2,0) -- (-\fsn/2,0) -- (-\fsn/2, 2/3) -- cycle; 
\draw[pattern=north east lines] (\fsn*3/2,0) -- (\fsn/2,0) -- (\fsn/2, 2/3) -- cycle; 
\draw[pattern=north east lines] (-\fsn/2,2/3) -- (-\fsn/2,2/3+2/3) -- (0, 1+2/3) -- (\fsn/2,2/3+2/3) -- (\fsn/2,2/3) -- (0,1) -- cycle; 

\draw[{Circle[length=3pt]}-{Stealth},shorten <=-1.5pt] (-3,0.2) to[bend left] (-1,1.2);
\draw[{Circle[length=3pt]}-{Stealth},shorten <=-1.5pt] (3,0.2) to[bend right] (1,1.2);
\node at (-2.7,1.3) {alias};

\end{tikzpicture}   
\\[2mm]
\mbox{\textbf{$\Downarrow$ \underline{Increase sampling frequency}}}
\\[2mm]
\begin{tikzpicture}[scale=0.65,domain=-5:5]%
\def\fsn{5/1.5}
\draw[very thick, blue, fill=blue, fill opacity = 0.3] (-5.5,0) -- (-5,0) -- (0, 1) -- (5,0) -- (5.5,0);
\node at (3,0.8) {\color{blue}{$P_v$}, \color{red}{$P_{\vd}$}};
\draw[very thick, red, dashed, fill=red, fill opacity = 0.3] (-5,0) -- (0, 1) -- (5,0);
 
\draw[-latex] (-6,0) -- (6,0) node[right] {$f$};%
\draw[-latex] (0,0) -- (0,3.2) node[above] {PSD};%

\foreach \n in {-5,5} {
    \draw[very thin,color=gray] (\n,0) -- (\n,3); %
    \tikzmath{
      \nabs = int(abs(\n*1.01/5));
      \nn = int(\n);
    }
    \ifthenelse{\nn < 0}
        {\node[below] at (\n,0) {$-\frac{\nabs}{2} \fs$};} %
        {\node[below] at (\n,0) {$\frac{\nabs}{2} \fs$};}    %
}

\end{tikzpicture}

\caption{Reduction of aliased noise by increasing the sampling frequency.}
\label{fig:aliasing_reduction}
\end{figure}

\subsection{Quantitative Analysis of Noise Reduction through Low-pass Filtering with Fast Sampling}

To demonstrate how sampling frequency affects noise reduction quantitatively, let us first analyze the effect of low-pass filtering combined with fast sampling in case of white noise.

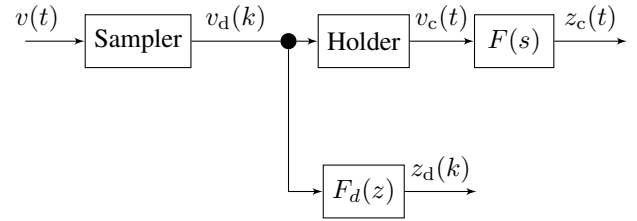
\begin{figure}[t]   %
 \centering
\tikzstyle{block} = [draw, rectangle, minimum height=2em, minimum width=3em]
\tikzstyle{sum} = [draw, circle, node distance=1cm]
\tikzstyle{branch} = [draw, circle, node distance=1.5cm, minimum size=0.2cm,inner sep=0pt,fill]
\tikzstyle{input} = [coordinate]
\tikzstyle{output} = [coordinate]
\tikzstyle{pinstyle} = [pin edge={to-,thin,black}]
\begin{tikzpicture}[auto, node distance=2cm,>=latex']
    \node [input, name=input] {};
    \node [block, right of=input, node distance=1.5cm] (sampler) {Sampler};
    \node [branch, right of=sampler, node distance=2cm] (branch) {};
    \node [block, right of=branch,  node distance=1cm] (holder) {Holder};
    \node [block, right of=holder, node distance=2cm] (Fs) {$F(s)$};
    \node [output, right of=Fs, node distance=1.5cm] (zc) {};

    \node [block, below of=holder, node distance=2cm] (Fdz) {$F_d(z)$};
    \node [output, right of=Fdz, node distance=1.5cm] (zd) {};

    \draw [->] (input) -- node[pos=0.2] {$v(t)$} (sampler);
    \draw [-] (sampler) -- node {$\vd(k)$} (branch);
    \node [input, above of=branch] {$\vd(k)$};
    \draw [->] (branch) --  (holder);
    \draw [->] (holder) -- node {$\vc(t)$} (Fs);
    \draw [->] (Fs) -- node {$\zc(t)$} (zc);

    \draw [->] (branch) |-  (Fdz);
    \draw [->] (Fdz) -- node {$\zd(k)$} (zd);
\end{tikzpicture}
 \caption{Discrete-time filter $F_h(z)$ with sampling interval $h$}
 \label{fig:Fh}
\end{figure}  %
Consider a continuous-time filter $F(s)$ that is stable and strictly proper, with input $\vc(t)$ and output $\zc(t)$.
Let $F_h(z)$ be its discretized version via zero-order hold (ZOH) with sampling interval $h$, where the input-output relationship is given by:
\begin{align}
\vc(t)  & = \vd(k) ~~(kh \le t < (k+1)h ) \\
\zd(k)  & = \zc(kh)  
\end{align}
as shown in Fig.~\ref{fig:Fh}.

Let $(A, B, C)$ be a minimal realization of $F(s)$.
Then, a state-space representation of $F_h(z)$ is  given by $(e^{Ah},~ \Phi (h) B,~ C)$ with $\Phi(h) =\int_{0}^{h} e^{A(h-\tau)} d \tau$.  
Using this state-space representation, the $H_2$ norms can be calculated as:
\begin{align}
\| F(s) \|_2 & = \sqrt{ \int_{0}^{\infty} (C e^{At} B)^2 dt } \\
\| F_h (z) \|_2 & = \sqrt{ \sum_{k=0}^{\infty} (C e^{Akh} \Phi(h) B)^2 }
\end{align}

Then, the following fact is well-known.

\begin{fact}[Filter Output Variance]\label{fact:filt_var}
Suppose the input $\vd(k)$ be zero-mean white noise satisfying
\begin{align}
{\bf E} [\vd^2(k)] = 1.
\label{eq:vd}
\end{align}
Then the corresponding output $\zd(k)$ is zero mean and its variance is given by
\begin{align}
 {\bf E} [\zd^2(k)]  =  \|F_h(z) \|_2^2
\label{eq:zd_variance}
\end{align}
\end{fact}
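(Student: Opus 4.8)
The plan is to establish \eqref{eq:zd_variance} by a direct computation of the output variance from the state-space recursion, exploiting the fact that white noise decouples the cross terms. First I would write the state of $F_h(z)$ under zero-order hold as the linear recursion $x(k+1) = e^{Ah} x(k) + \Phi(h) B\, \vd(k)$ with $\zd(k) = C x(k)$, and assume the filter has been running from $k=-\infty$ so that the process is stationary (stability of $F(s)$, hence of $e^{Ah}$, guarantees convergence of the relevant series). Unrolling the recursion gives the moving-average form $\zd(k) = \sum_{j=0}^{\infty} g_j\, \vd(k-1-j)$ with impulse-response coefficients $g_j = C e^{Ajh}\Phi(h) B$, which are exactly the terms appearing in $\|F_h(z)\|_2$.

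Next I would take the expectation of $\zd^2(k)$. Because $\vd$ is zero-mean, $\zd$ is zero-mean; expanding the square and using $\mathbf{E}[\vd(k-1-i)\vd(k-1-j)] = \delta_{ij}$ from \eqref{eq:vd}, all cross terms vanish and I am left with $\mathbf{E}[\zd^2(k)] = \sum_{j=0}^{\infty} g_j^2 = \sum_{j=0}^{\infty} (C e^{Ajh}\Phi(h)B)^2$, which is precisely $\|F_h(z)\|_2^2$ as defined just before the statement. This completes the argument. An alternative, essentially equivalent, route is the frequency-domain Parseval identity: $\mathbf{E}[\zd^2(k)] = \frac{1}{2\pi}\int_{-\pi}^{\pi} |F_h(e^{j\omega})|^2\, d\omega = \|F_h(z)\|_2^2$, invoking that the PSD of unit-variance white noise is flat; I would mention this as a remark but carry out the time-domain version since it is self-contained given the realization already introduced.

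I do not anticipate a genuine obstacle here — this is the standard Lyapunov/$H_2$-norm identity for discrete linear systems driven by white noise. The only point needing a little care is the justification that the infinite sums converge and that the stationary process is well-defined: this follows from the Schur-stability of $e^{Ah}$, which inherits stability from $A$ (the eigenvalues of $e^{Ah}$ are $e^{\lambda_i h}$ with $\mathrm{Re}\,\lambda_i < 0$), so $\|g_j\| $ decays geometrically and $\sum_j g_j^2 < \infty$. If one prefers to avoid the two-sided stationary construction, the same conclusion holds as the limiting variance of $\zd(k)$ as $k\to\infty$ starting from $x(0)=0$, with the finite partial sum $\sum_{j=0}^{k-1} g_j^2$ increasing monotonically to $\|F_h(z)\|_2^2$.
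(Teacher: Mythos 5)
Your derivation is correct: the moving-average expansion $\zd(k)=\sum_{j\ge 0} C e^{Ajh}\Phi(h)B\,\vd(k-1-j)$, whiteness killing the cross terms, and Schur stability of $e^{Ah}$ for convergence give exactly $\mathbf{E}[\zd^2(k)]=\|F_h(z)\|_2^2$ with the paper's definition of the discrete $H_2$ norm. The paper offers no proof at all—it states this as a well-known fact—and your time-domain argument (or the Parseval variant you mention) is precisely the standard justification it implicitly relies on, so there is nothing to reconcile.
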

The right hand side of (\ref{eq:zd_variance}) can be evaluated by the following Lemma.

\begin{lemma}\label{lem:norm}
When $h \ll 1$ holds, we have
 \begin{align}
\| F_h(z) \|_2^2 \approx h \|F(s) \|_2^2.
\label{eq:H2normapproximation}
\end{align}

\begin{proof}
When $h\ll 1$, we have $\Phi(h) \approx h I$ and
\begin{align}
\| F_h(z)\|_2^2   \approx \sum_{k=0}^{\infty} (C e^{Akh} B)^2 h^2.
\label{eq:Fz2norm}
\end{align}

On the other hand, 
\begin{align}
h\| F(s)\|_2^2  = h\lim_{h \to 0}  \sum_{k=0}^{\infty}  (C e^{Akh} B)^2 h  \approx  \| F_h(z)\|_2^2,
\label{eq:Fs2norm}
\end{align}
and we obtain (\ref{eq:H2normapproximation}).
\end{proof}
\end{lemma}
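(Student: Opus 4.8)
The plan is to make the two approximations in the paper's sketch quantitative: expand the zero-order-hold input matrix $\Phi(h)$ to first order in $h$, substitute into the series for $\|F_h(z)\|_2^2$, isolate the dominant contribution, and recognize it as a Riemann sum for the integral defining $\|F(s)\|_2^2$. First I would pin down $\Phi(h)\approx hI$. Using the change of variable $\sigma=h-\tau$ and the power series of the matrix exponential, $\Phi(h)=\int_0^h e^{A(h-\tau)}\,d\tau=\int_0^h e^{A\sigma}\,d\sigma=hI+\tfrac{h^2}{2}A+O(h^3)$, so $\Phi(h)B=hB+h^2 r(h)$ with $\|r(h)\|$ bounded uniformly for $h$ near $0$. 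Setting $a_k:=Ce^{Akh}B$ and $b_k:=Ce^{Akh}r(h)$, every summand in $\|F_h(z)\|_2^2=\sum_{k\ge 0}(Ce^{Akh}\Phi(h)B)^2$ becomes $h^2a_k^2+2h^3a_k b_k+h^4 b_k^2$.

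Next I would use that $F(s)$ is stable, so $A$ is Hurwitz and $\|e^{At}\|\le Me^{-\alpha t}$ for some $M,\alpha>0$; hence $|a_k|,|b_k|\le C e^{-\alpha kh}$ and $\sum_{k\ge 0}e^{-2\alpha kh}=(1-e^{-2\alpha h})^{-1}=O(1/h)$. This yields $\sum_{k\ge 0}2h^3 a_k b_k=O(h^2)$ and $\sum_{k\ge 0}h^4 b_k^2=O(h^3)$, both negligible next to the leading term. For that leading term, write $h^2\sum_{k\ge 0}a_k^2=h\big(\sum_{k\ge 0}(Ce^{Akh}B)^2\,h\big)$; the bracketed quantity is the left-endpoint Riemann sum of $t\mapsto(Ce^{At}B)^2$, which is continuous, nonnegative, and exponentially decaying, so (splitting into $[0,T]$ and its tail, with the tail controlled uniformly in small $h$ by the exponential bound) it converges to $\int_0^\infty(Ce^{At}B)^2\,dt=\|F(s)\|_2^2$ as $h\to 0$. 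Combining the three pieces gives $\|F_h(z)\|_2^2=h\|F(s)\|_2^2+o(h)$, i.e.\ equation (\ref{eq:H2normapproximation}).

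The hard part will be handling the interplay between the per-term small-$h$ expansion and the infinite summation over $k$: the error in a single summand is $O(h^3)$, but roughly $\Theta(1/h)$ terms contribute appreciably, so a crude bound gives only $O(h^2)$ total error. That is still $o(h)$ and hence harmless here, but the argument must exploit the uniform exponential decay of $e^{Akh}$ in $k$ (the Hurwitz bound) rather than truncating the series arbitrarily; the same decay estimate is what makes $\sum_{k\ge 0}(Ce^{Akh}B)^2 h$ a genuinely convergent Riemann sum with a tail that vanishes uniformly as the truncation level grows. Apart from that bookkeeping, every step is elementary.
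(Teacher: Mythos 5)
Your proposal is correct and follows essentially the same route as the paper's own proof: the expansion $\Phi(h)\approx hI$ followed by recognizing $\sum_{k\ge 0}(Ce^{Akh}B)^2\,h$ as a Riemann sum converging to $\|F(s)\|_2^2$. You merely make the paper's sketch quantitative by tracking the error terms via the Hurwitz decay bound, which is a welcome but not structurally different refinement.
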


The most important point here is that $\| F_h(z) \|_2$ approaches $0$ as $h \to 0$.
This, combined with Fact~\ref{fact:filt_var}, implies  that the variance of the filter output $\zd(k)$ decreases in proportion to the sampling interval $h$.
Meanwhile, when the filter input is a signal of interest and $h$ is sufficiently small for its dynamics, its magnitude remains invariant with respect to $h$. 
As a result, if a very small $h$ is chosen, the SNR improves drastically
by using any low-pass filter. 

To illustrate this effect, we provide a numerical example in Fig.~\ref{fig:filtering_effect} showing the filtering results for different sampling periods. 
In  Fig.~\ref{fig:filtering_effect}, the left hand side shows an example of filter input $\vd(k)$ (signal contaminated with noise) sampled with the interval $h$, and the right hand side corresponds the set of filter outputs $\zd(k)$ for different realizations of noise.  
The input $\vd(k)$ is injected to the filter $F(s)=\frac{1}{s+1}$ with ZOH.
The figure demonstrates how faster sampling combined with low-pass filtering effectively reduces noise while preserving the signal of interest.
\begin{figure*}[ht]   %
 \centering
 \begin{tikzpicture}
    \pgfmathsetmacro{\ws}{4.5}
    \tikzstyle{block} = [draw, rectangle, minimum height=2em, minimum width=3em]
    \tikzstyle{sum} = [draw, circle, node distance=1cm]
    \tikzstyle{input} = [coordinate]
    \tikzstyle{output} = [coordinate]
    
    \begin{scope}[yshift=1.5cm]
        \node[left] at (-7.5,0) {$h=\SI{20}{ms}$};
        \begin{scope}[xshift=-{\ws}cm]
            \node[inner sep=0pt] (in20ms) {\includegraphics[scale=1]{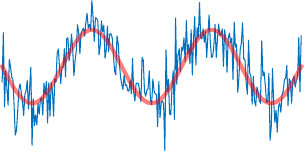}};
            \draw[-latex,red, thick] (-0.2,1) node[above] {$\text{signal}$} -- (-0.45,0.1) ;
            \draw[-latex,blue,thick] (-0.9,-1) node[below] {$\text{signal}+\text{noise}$} -- (-1.7,-0.4) ;        
        \end{scope}
        \node [block] (F) {$\displaystyle\frac{1}{s+1}$};            
        \node [input, name=input, left of=F] {};
        \node [output, name=output, right of=F] {};    
    
        \draw [->] (input) --  (F);
        \draw [->] (F) -- (output);
        \node at (\ws,0) {\includegraphics[scale=1]{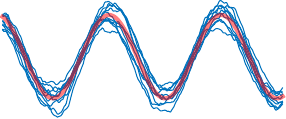}};
    \end{scope}
    \begin{scope}[yshift=-1.5cm]
        \node[left] at (-7.5,0) {$h=\SI{1}{ms}$};    
        \node at (-\ws,0) {\includegraphics[scale=1]{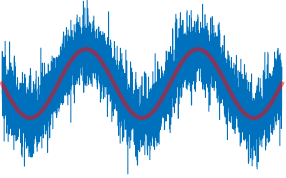}};
        \node [block] (F) {$\displaystyle\frac{1}{s+1}$};            
        \node [input, name=input, left of=F] {};
        \node [output, name=output, right of=F] {};    
    
        \draw [->] (input) --  (F);
        \draw [->] (F) -- (output);        
        \begin{scope}[xshift=-{-\ws}cm]
        \node (out1ms) {\includegraphics[scale=1]{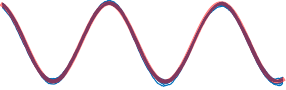}};
        \draw[decorate, decoration={brace, amplitude=7pt}] (2.4,-1.5) -- (-2.4,-1.5) node[midway, yshift=-0.6cm]{Filtered};
        \node at (0,-2.5) {(10 different realizations of noise)};
        \end{scope}
    \end{scope}
\end{tikzpicture}
 \caption{Filtering effect with fast sampling}
 \label{fig:filtering_effect}
\end{figure*}   %

\subsection{Covariance of the Least Squares Estimators under Fast Sampling}
Building on the filtering analysis above, we now present a fundamental result regarding the asymptotic behavior of the covariance of the least squares estimator as $h \to 0$.

\begin{theorem}[Asymptotic Covariance Scaling with Fast Sampling]\label{thm:asymptotic_covariance_scaling}
Consider the linear regression model
\begin{align}
Y_h = \Phi_h\ths + V_h
\end{align}
where
\begin{align}
Y_h &=  [y(0), y(h), \ldots, y((N_h-1)h)]^\top\in \mathbb{R}^{N_h},\\
\Phi_h &= \left[\varphi(t)^\top, \varphi(h)^\top,\ldots, \varphi((N_h-1)h)^\top \right]^\top \in \mathbb{R}^{N_h \times n_\theta},\\
V_h &= [v(0), v(h), \ldots, v((N_h-1)h)]^\top\in \mathbb{R}^{N_h}
\end{align}
and, $y(t)\in\mathbb{R}$ is the observed output signal, $\varphi(t)\in\mathbb{R}^{n_\theta}$ is the regressor signal and assumed to be deterministic, $v(t) \in \mathbb{R}$ is the noise, and $\ths\in\mathbb{R}^{n_\theta}$ is the true parameter. 
Here, $N_h = \Tf/h+1$ is the number of samples for some fixed $\Tf>0$, and $n_\theta$ is the number of parameters.
Suppose the following conditions hold:

\begin{enumerate}
\item[(A1)] \textbf{Excitation Condition:}  
There exists a positive definite matrix $R\in \mathbb{R}^{n_\theta \times n_\theta}$ such that
\begin{align}
\frac{h}{\Tf}\Phi_h^\top \Phi_h \to R = \int_0^{\Tf}\varphi(t)^\top\varphi(t)\,\mathrm{d}t\quad \text{as } h \to 0.
\end{align}

\item[(A2)] \textbf{Total Energy of Noise:}  
Suppose $V_h$ is zero-mean with covariance $\Sigma_h \succeq 0$ (positive semidefinite) and 
\begin{align}
\Tr (\Sigma_h) = O(1) \quad\text{as} \quad h \to 0, 
\end{align}
where $\Tr (\cdot)$ denotes the trace of a matrix.
\end{enumerate}

Then the least squares estimator $\hat{\theta} = (\Phi_h^\top\Phi_h)^{-1}\Phi_h^\top Y_h$ satisfies
\begin{align}
\Tr(\Cov(\hat{\theta})) = O(h) \quad \text{as} \quad h \to 0.
\end{align}

\end{theorem}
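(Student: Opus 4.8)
The plan is to argue directly from the closed-form of the least squares estimator. Writing $M_h := (\Phi_h^\top\Phi_h)^{-1}$ and using that $V_h$ is zero-mean, the estimator is unbiased with error $\hat{\theta}-\ths = M_h\Phi_h^\top V_h$, so that
\begin{align}
\Cov(\hat{\theta}) = M_h\,\Phi_h^\top\Sigma_h\Phi_h\,M_h .
\end{align}
The whole proof then reduces to tracking the dependence on $h$ of the two factors $M_h$ and $\Phi_h^\top\Sigma_h\Phi_h$, and combining them through the elementary trace inequality $\Tr(AB)\le\lambda_{\max}(A)\,\Tr(B)$, valid for symmetric positive semidefinite $A,B$.

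First I would use (A1) to control $M_h$. Since $\tfrac{h}{\Tf}\Phi_h^\top\Phi_h$ converges to the positive definite matrix $R$, its eigenvalues converge to those of $R$; hence for all sufficiently small $h$ they lie between fixed positive constants, which gives both $\lambda_{\max}(M_h) = 1/\lambda_{\min}(\Phi_h^\top\Phi_h) = O(h)$ and $\Tr(\Phi_h^\top\Phi_h) = O(1/h)$. Next I would control the inner factor using (A2): the point that makes colored noise and input–noise correlation irrelevant is that positive semidefiniteness of $\Sigma_h$ forces $\lambda_{\max}(\Sigma_h)\le\Tr(\Sigma_h) = O(1)$, so
\begin{align}
\Tr\!\big(\Phi_h^\top\Sigma_h\Phi_h\big) = \Tr\!\big(\Sigma_h\,\Phi_h\Phi_h^\top\big) \le \lambda_{\max}(\Sigma_h)\,\Tr\!\big(\Phi_h^\top\Phi_h\big) = O(1)\cdot O(1/h) = O(1/h).
\end{align}

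Finally I would combine the two estimates: applying the trace inequality once more with $A=M_h^2$ and $B=\Phi_h^\top\Sigma_h\Phi_h$,
\begin{align}
\Tr(\Cov(\hat{\theta})) = \Tr\!\big(M_h^2\,\Phi_h^\top\Sigma_h\Phi_h\big) \le \lambda_{\max}(M_h)^2\,\Tr\!\big(\Phi_h^\top\Sigma_h\Phi_h\big) = O(h^2)\cdot O(1/h) = O(h),
\end{align}
which is the claim. The only delicate point is the uniformity of the $o(\cdot)$ remainders in (A1) — that is, that the largest and smallest eigenvalues of $\tfrac{h}{\Tf}\Phi_h^\top\Phi_h$ are genuinely bounded away from $\infty$ and $0$ for small $h$ — which follows from continuity of eigenvalues under the assumed convergence to a positive definite limit. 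I expect the main conceptual obstacle to be the cross term $\Phi_h^\top\Sigma_h\Phi_h$ for non-diagonal $\Sigma_h$; the bound $\lambda_{\max}(\Sigma_h)\le\Tr(\Sigma_h)$ is exactly what lets us handle it with no structural assumption on the noise.
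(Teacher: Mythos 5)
Your proposal is correct and follows essentially the same route as the paper: both start from the sandwich formula $\Cov(\hat{\theta}) = (\Phi_h^\top\Phi_h)^{-1}\Phi_h^\top\Sigma_h\Phi_h(\Phi_h^\top\Phi_h)^{-1}$, use (A1) to get $\lambda$-scale bounds of order $O(h)$ and $O(1/h)$ on the inverse Gram matrix and the Gram matrix respectively, and use $\lambda_{\max}(\Sigma_h)\le\Tr(\Sigma_h)=O(1)$ as the key step that makes the noise structure irrelevant. The only difference is cosmetic: you bound the trace directly via $\Tr(AB)\le\lambda_{\max}(A)\Tr(B)$, whereas the paper bounds the spectral norm of the covariance and then multiplies by $n_\theta$.
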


\begin{proof}
From (A1),
\begin{align}
\left\|(\Phi_h^\top\Phi_h)^{-1}\right\| &= O(h),& \cond(\Phi_h^\top \Phi_h) &= O(1)
\end{align}
as $h\to 0$, where $\|\cdot\|$ denotes the spectral norm and $\cond(\Phi_h^\top \Phi_h) := \frac{\|\Phi_h^\top\Phi_h\|}{\|(\Phi_h^\top\Phi_h)^{-1}\|}$ denotes the condition number. 
Since $\|\Sigma_h\|\le\Tr(\Sigma_h)$, $\|\Sigma_h\|=O(1)$ as $h\to 0$ and
\begin{align}
\Cov(\thhat) &= (\Phi_h^\top\Phi_h)^{-1}\Phi_h^\top \Sigma_h \Phi_h (\Phi_h^\top \Phi_h)^{-1}\\
    \|\Cov(\thhat)\| &\le \|(\Phi_h^\top\Phi_h)^{-1}\|^2 \cdot \|\Phi_h^\top \Phi_h\| \cdot \|\Sigma_h\|\\
    &= \underbrace{\|(\Phi_h^\top\Phi_h)^{-1}\|}_{=O(h)} \cdot \underbrace{\cond(\Phi_h^\top \Phi_h)}_{=O(1)} \cdot \underbrace{\|\Sigma_h\|}_{=O(1)}\\
    &= O(h)
\end{align}
as $h\to 0$. 
And, the statement holds because $\Tr\left(\Cov(\thhat) \right) \le n_\theta \|\Cov(\thhat)\|$.

\end{proof}

\begin{remark}
At first glance, Theorem~\ref{thm:asymptotic_covariance_scaling} might applicable to both discrete-time and continuous-time models.
However, while Assumption (A1) is satisfied for continuous-time models, it does not hold in the discrete-time case.
In discrete-time models, the regressor vector $\varphi(t)$ is constructed from input-output signals that are delayed by one sampling interval.
As the sampling interval $h\to0$, these delayed signals converge to the identical signal, causing the regressor matrix $\Phi_h$ to lose full rank.
In contrast, for continuous-time models, particularly when employing methods like SVF, the regressor $\varphi(t)$ is typically constructed by passing the input-output signals through a continuous-time linear filter, which are independent of $h$.
Consequently, if these signals are linearly independent and continuous on almost every interval, the scaled matrix $\frac{h}{\Tf}\Phi_h^\top \Phi_h$ converges to a positive definite matrix $R$ as $h\to0$, satisfying Assumption (A1).
\end{remark}

\begin{remark}
    Assumption (A2) requires the total noise variance over the fixed interval $\Tf$, $\Tr\left(\Cov(\thhat) \right) = \sum_{k=0}^{N_h-1} E\left[v(kh)^2\right]$ to be bounded ($O(1)$) as $h\to0$.
    This is physically plausible without ideal anti-aliasing filters, where faster sampling captures roughly constant total noise energy, including aliased components (Fig.~\ref{fig:aliasing_reduction})
\end{remark}

Theorem~\ref{thm:asymptotic_covariance_scaling} demonstrates that if the regressor matrix $\Phi_h$ can be treated as deterministic, then the least squares parameter estimates converge asymptotically to the true parameter $\ths$ as $h\to0$. 
This convergence result holds independently of the noise properties (e.g., whether the noise is colored or exhibits correlation), provided that the total noise energy is bounded (as stated in Assumption (A2)).

\subsection{Impact of Sample-and-Hold Approximation Error}
A second, often overlooked benefit of fast sampling arises from the sample-and-hold mechanism itself.  
In many experimental setups, especially for the situations where the sampling frequency is lowered for system identification, both the control input to the plant and its measured output are subject to zero-order hold (ZOH) or similar sample-and-hold behavior.  
When the sampling interval $h$ is relatively large, the ZOH approximation induces a non-negligible error between the true continuous-time signal $x(t)$ and its held version $x_{\mathrm{ZOH}}(t)$, typically of order $O(h)$ in the supremum norm:
\begin{align}
    \sup_{t\in[ih,(i+1)h)} \bigl| x(t) - x_{\mathrm{ZOH}}(t)\bigr|
    \;\approx\;
    O(h).
\end{align}

\begin{remark}
Because the aliasing-based noise amplitude scales on the order of $O(\sqrt{h})$, the sample-and-hold error (which is $O(h)$) dominates when $h$ is relatively large, whereas aliasing-based noise eventually takes over as $h$ becomes small.    
\end{remark}

\section{System Identification using State Variable Filter}\label{sec:svf}
The State Variable Filter (SVF) is a classical approach for identifying continuous-time systems. 
Although it has been studied for many years, it has not been widely adopted for final parameter estimation due to its inherent bias, which stems from correlation between noise and regressors.
Nevertheless, SVF is often used to generate initial estimates for more sophisticated identification algorithms. 
In particular, it has been noted that SVF can handle challenging tasks such as closed-loop identification of unstable systems~\cite{CDC22}, where many other methods struggle or require additional information.

It is also important to note an often-overlooked source of error when implementing SVF in practice: the discrepancy introduced by computing filtered signals from sampled and held data. 
If the sampling interval is chosen to be relatively long (for instance, in accordance with the traditional ``ten times the bandwidth'' rule), the zero-order hold operation has a non-negligible effect on the filter's internal signals, resulting in degraded performance. 
Consequently, under these conventional guidelines for sampling frequency selection, SVF-based methods may appear less accurate than they actually are.

In contrast, \emph{fast sampling} provides an effective solution to these limitations. 
When the sampling interval is sufficiently small, the zero-order hold error becomes negligible, and the noise is substantially reduced in the filtered quantities. 
As a result, the bias in SVF estimates is mitigated, and the method's overall accuracy can improve dramatically. 

In the subsections that follow, we detail SVF algorithm. 
In particular, we show that the proposed procedure can handle offsets in measurement and manage unstable plants operating under closed-loop control without requiring knowledge of the controller. 
This flexibility makes the SVF approach a compelling choice for a wide range of identification problems.

\subsection{Problem Setting}

\begin{figure}[ht]
 \centering
\begin{tikzpicture}[auto, node distance=2cm, >=latex']
  \tikzstyle{block} = [draw, rectangle, minimum height=1.5em, minimum width=2em,node distance=1cm]
  \tikzstyle{sum} = [draw, circle, inner sep=0pt, minimum size=0.5em, node distance=1cm]
  \tikzstyle{split} = [draw, circle,fill, inner sep=0pt, minimum size=0.5em, node distance=1cm]
  \tikzstyle{input} = [node distance =1cm]
  \tikzstyle{output} = [node distance =1cm]

  \node [input, name=r_u] (r_u) {};
  \node [sum, right of=r_u] (sum_r_u) {};
  \node [sum, right of=sum_r_u] (sum_w) {};
  \node [block, right of=sum_w] (P) {$\Ps$};
  \draw [->] (r_u) -- node {$r_u$} (sum_r_u);
  \draw [->] (sum_r_u) -- node {$u$} (sum_w);
  \node[input, name=w, above of=sum_w,node distance=0.75cm](w){$w$};
  \draw [->] (w) -- (sum_w);  
  \draw [->] (sum_w) -- node {} (P);
  \node [sum, right of=P] (sum_eta) {};
  \draw [->] (P) -- node {} (sum_eta);
  \node[input, name=eta, above of=sum_eta, node distance=0.75cm](eta){$\eta$};
  \draw [->] (eta) -- (sum_eta);  
  \node [split, right of=sum_eta] (split_y) {};
  \draw [-] (sum_eta) -- node {} (split_y);
  \node [output, name=y, right of=split_y] {$y$};
  \draw [->] (split_y) -- (y);  
  \node [sum, above of=split_y, node distance=1.2cm](sum_r_y) {};
  \draw [->] (split_y) -- node[left, pos=0.8] {$-$} (sum_r_y);  
  \node [block, above of=P, node distance=1.2cm] (K) {$K$};
  \node[input, right of=sum_r_y](r_y){$r_y$};
  \draw [->] (r_y) -- (sum_r_y);  
  \draw [->] (sum_r_y) -- (K);  
  \draw [->] (K) -| (sum_r_u);  
  
\end{tikzpicture}
 \caption{The data generating system}
 \label{fig:data_generating_system}
\end{figure}
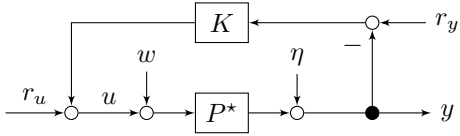

We consider a closed-loop system described by
the differential equation
\begin{align}
y(t)  &= \Ps(p) \left(u(t) +w(t)\right) + \eta(t), \\
u(t)  &= r_u(t) + K(p) \left(r_y(t)-y(t)\right),  
\end{align}
where $p$ denotes the time-domain differential operator and $\Ps(p)$ is the rational function of $p$ which represents the single-input single-output (SISO) plant to be identified, and  $K(p)$ represents an unknown controller which stabilizes $\Ps(p)$ (see Fig.~\ref{fig:data_generating_system}).
The scalar signals $u(t)$, $y(t)$, $r_u(t)$ and $r_y(t)$ are the plant input, the plant output measurement, the excitation signal and the reference signal, respectively.
Here, $r_u(t)$ and $r_y(t)$ may be unknown but are supposed to be rich enough for identification purpose.  
And $w(t)$ and $\eta(t)$ denote the input disturbance and the measurement noise, respectively, which may be colored and may have some offsets, namely 
\begin{align}
\boldmath{E}[w(t)] \ne 0, ~~
\boldmath{E}[\eta(t)] \ne 0.
\end{align}

The plant is assumed to belong to a known parametrized set
\[
{\cal P} := \{ P(p, \theta) ~|~ \theta \in \sR^{n_{\theta}} \},
\]
so it is described  by $\Ps (p)=P(p, \ths)$ for some parameter $\ths$. 
The I/O data are collected  for a given time period $t \in [0, \Tf]$, and sampling interval $h$ is chosen by the user.
Hence the available I/O data is given by
\begin{align}
{\cal  Z} =\{ u(kh), y(kh)  ~|~ k=0,1,\cdots, N_h-1 \}
\end{align}
where $N_h = (\Tf/h) + 1 \in \sN$ is the data length which depends on $h$.  
The problem here is to estimate $\ths$ from the I/O data alone.

\begin{remark}
When the plant $\Ps (p)$ itself is stable and $K(p) = 0$, the above problem becomes an ordinary open-loop identification problem. Namely, both open/closed-loop identification can be handled here.

Although the importance of identification in a closed-loop environment is widely recognized, there are not many that do not require information from controllers or external inputs.
In particular, there are few which discusses this type of closed-loop identification problems in continuous-time. 
In addition, the difficulty of identifying unstable systems is not widely recognized. 
This paper is tackling such a difficult problem. 
Hence, if we show that there is a way to handle such a problem without the need for specific technique, the contribution to users in industry would be enormous.
\end{remark}

\begin{remark}
In the control of mechanical systems, there is often a constant-valued (or deemed constant-valued) disturbance, such as friction, at the input channel. 
The need to cope with this type of offsets is very high in real-world control applications.  
In addition, there are many sensors whose offsets cannot be ignored when observing outputs. 
Once an offset is added, the effect on the identification accuracy is enormous. 
For this reason, we set up the problem in such a way that the offset exists in the disturbance or noise. 
It may be possible to estimate the magnitude of the offset in advance by assuming it.
However, in case of MIMO systems which will be discussed later, it does not seem realistic to estimate the amount of offset for each channel and correct it. 
In the following, we will discuss a method to remove the effect of the offset in identification without estimating the offset amount.
\end{remark}

\begin{remark}
This paper discusses a continuous-time model identification only.
However, once a continuous-time model is obtained, it is easy to obtain a discrete-time model for any sampling interval.    
\end{remark}

\subsection{Identification Procedure}

The SVF approach, which can be used to identify unstable systems and is simple to use, is employed here. 
In this section, for simplicity of notation, the model is expressed as follows:
\begin{align}
P(p,\theta) &=\frac{ \sum_{i=0}^{n-1} n_i p^i }{ p^n+ \sum_{i=0}^{n-1} d_i p^i } \\
 \theta   & = [d_{n-1}, \cdots, d_{0}, n_{n-1}, \cdots, n_0]^\top.
\end{align}

\subsection{SVF-like approach}

In the ideal case where noise-free continuous time I/O data $[u(t), y(t)], t \in [0, \Tf]$ is available,    
define
\begin{align}
e(t, \theta) & := \left(p^n+ \sum_{i=0}^{n-1} d_i p^i \right)y(t) - \left( \sum_{i=0}^{n-1} n_i p^i  \right)u(t)  \\
      & = y^{(n)}(t) +  \sum_{i=0}^{n-1}d_i y^{(i)}(t) - \sum_{i=0}^{n-1}n_i u^{(i)}(t),
\end{align}
then we will have 
\begin{align}
\hat{\theta} = \argmin_{\theta} \int_0^{\Tf}e^2(t,\theta) dt,
\label{eq:J}
\end{align}
which is equal to $\ths$.

Now, we have noise/disturbances and I/O data differentiation is not possible. Hence, we introduce a filter $F(p)$ whose relative-degree is larger or equal to the system order $n$, and calculate filtered I/O data $[\uf(t), \yf(t)]$ and their 
 higher order derivative can also be calculated by
\begin{align}
\yf(t) & =F(p)y(t), &\yf^{(k)}(t)&=p^kF(p)y(t),  \\ %
\uf(t) &=F(p)u(t), &\uf^{(k)}(t)&=p^{k} F(p)u(t) ~%
\end{align}
for ($1 \le k \le n$).
Define 
\begin{align}
\ef(t,\theta) & := \left( p^n+ \sum_{i=0}^{n-1} d_i p^i \right)\yf(t) - \left( \sum_{i=0}^{n-1} n_i p^i  \right)\uf(t)  \\
      & = \yf^{(n)}(t) +  \sum_{i=0}^{n-1}d_i \yf^{(i)}(t) - \sum_{i=0}^{n-1}n_i \uf^{(i)}(t), 
\label{eq:ef}
\end{align}
then $\hat{\theta}$ can be obtained from (\ref{eq:J}) by replacing $e(t, \theta)$ by $\ef(t,\theta)$.

Since $\ef(t, \theta)$ is affine in $\theta$, the optimization problem is a linear regression.

\begin{remark}[SVF-like approach]
In the conventional SVF framework, the order of the filter $F$ is set equal to the system order, and the main goal is to obtain the state variables from the filter. 
In reality, however, we can design the filter in various ways. 
As discussed later, for example, a filter with a band-pass characteristic can be used to cope with steady disturbances. 
In this paper, we refer to such cases as the ``SVF-like approach.'', where obtaining the state variables is not necessarily the primary objective. 
\end{remark}

\begin{remark}
The authors proposed a fixed pole observer method and showed its effectiveness in identifying stable/unstable plants in open/closed-loop environment\cite{CDC22}. 
And simple calculation shows that the fixed pole observer method is equivalent to SVF.
\end{remark}

We have (\ref{eq:ef}) at each sampled time $t=ih~(i=0,1, \cdots, N_h-1)$. 
This is equivalent to
\begin{align}
E &={Y}_A -D_F \theta, 
\label{eq:YA-DF} \\
E & :=  \begin{bmatrix}
e(t_0) \\
e(t_1) \\
\vdots \\
e(t_{N_h-1})
\end{bmatrix},~~
Y_A :=
\begin{bmatrix}
\yf^{(n)}(t_0) \\
\yf^{(n)}(t_1) \\
\vdots \\
\yf^{(n)}(t_{N_h-1}) \\
\end{bmatrix} \\
D_F &:=[-\yf, \uf] \\
\yf &:=
\begin{bmatrix}
\yf^{(n-1)}(t_0) & \cdots & \yf(t_0)  \\
\yf^{(n-1)}(t_1) & \cdots & \yf(t_1)  \\
\vdots               & \vdots & \vdots    \\
\yf^{(n-1)}(t_{N_h-1}) & \cdots & \yf(t_{N_h-1})  
\end{bmatrix} \\
\uf &:=
\begin{bmatrix}
 \uf^{(n-1)}(t_0) & \cdots & \uf(t_0) \\
 \uf^{(n-1)}(t_1) & \cdots & \uf(t_1) \\
 \vdots            & \vdots    & \vdots \\
 \uf^{(n-1)}(t_{N_h-1}) & \cdots & \uf(t_{N_h-1}) 
\end{bmatrix}
\end{align}
This yields the least squares solution as
\begin{align}
\hat{\theta} = (D_F^\top D_F)^{-1}D_F^\top Y_A.
\label{eq:LSsolution}
\end{align}

\subsection{On Accuracy of the Estimate}
Even when using the simple SVF-based method described above, one can achieve highly accurate parameter estimates by choosing a sufficiently small sampling interval $h$.

\subsubsection{Case of Zero-Mean Noise} \label{sec:accuracy_zero_mean} 
First, consider the situation where both the input disturbance $w(t)$ and the measurement noise $\eta(t)$ have zero means:
\begin{align}
{\bf E} [w] &=0,  & {\bf E} [\eta] &=0.
\end{align}
For simplicity, let $F(p)$ be a filter of relative degree at least $n+1$, where $n$ is the plant order.
Then each operator $p^{k}F(p)$ (for $k \le n$) is strictly proper.
Consequently, when we apply $p^{k}F(p)$ to the measured signal $y(t)$ and $u(t)$, the variance of noise components in the resulting filtered signals $\yf^{(k)}(t)$ and $\uf^{(k)}(t)$ become arbitrarily small as $O(h)$.
This is a result that can be derived from Lemma~\ref{lem:norm}, if we note that the components derived from $w(t)$ and $\eta(t)$ contained in $y(t)$ and $u(t)$ are the result of passing white noise through some linear filter.

The matrix $D_F$ is constructed from the filtered signals $\yf$ and $\uf$,
By Lemma~\ref{lem:norm}, the variance of the noise contained in these signals converges on the order of $O(h)$.
Meanwhile, if the component of $\frac{h}{\Tf} D_F ^\top D_F$ remains on the order of $O(1)$ as $h\to 0$ --- consistent with assumption (A1) in Theorem~\ref{thm:asymptotic_covariance_scaling} --- then, in the limit $h\to0$, we can treat $D_F$ as effectively deterministic.
Under these conditions, by letting $Y_h=Y_A$ and $\Phi_h=D_F$, Theorem~\ref{thm:asymptotic_covariance_scaling} implies that the estimation error variance in (\ref{eq:LSsolution}) is on the order of $O(h)$ as $h\to0$.

An important consequence of this fast-sampling effect is that one can identify the plant accurately whether it is stable or unstable, and whether or not it is operating in closed loop.
Furthermore, in the case of a closed-loop system,  information on the controller itself is not required at all, and it does not matter if the controller is nonlinear.
In other words, a very wide range of linear system identification problems can be handled in a unified manner 
(i.e., simple least-squares method alone) without the need to develop (or choose) any specific techniques.
 
\subsubsection{Case of Non-Zero-Mean Noise (Offsets)}\label{sec:offset}
Next, consider the practical scenario where the noise processes have constant offsets:
\begin{align}
{\bf E} [w] =:\bar{w} \neq  0,  \hspace
{3mm} {\bf E} [\eta] =: \bar{\eta} \neq 0.
\end{align}
A straightforward way to cancel these offsets is to include a differentiator $p$ in the filter as:
\begin{align}
F(p)=\frac{p n_F(p)}{d_F(p)}
\end{align}
where $d_F(p)$ is the denominator and $p  n_F(p)$ is the numerator.
Namely, the numerator includes the differential operator $p$. 
In fact, in this case, the filtered signal 
\begin{align}
\uf(t)=F(p)u(t)=F(p)\left\{ (u(t)-\bar{w}) +\bar{w}\right\}
\end{align}
consists of two parts, namely, $F(p)\left(u(t)- \bar{w}\right)$, and $F(p) \bar{w}$.
And, the second part vanishes as $t \to \infty$. 
So, the filtered signal converges to the response to the offset-removed signal $\left(u(t)- \bar{w}\right)$.
For example, when we choose  the filter as 
 \begin{align}
F(p)=\frac{p}{(p+1)^3},
\end{align}
$F(p) \bar{w}$ goes to $0$ as $t \to \infty$ and it is less then $\num{3.5e-5} \bar{w}$ for $t \ge \SI{15}{s}$. Therefore, it would be enough to construct 
$Y_{A}$ and $D_{F}$ based on the I/O data $[u(t), y(t)]$ after $t \ge 15$.
This is quite simple but effective as demonstrated later in numerical examples.  

\subsection{Extension to MIMO Systems}
While the above explanation focused on SISO plants for clarity, extending the SVF-like method to an MIMO plant is straightforward.
Suppose the plant has $m$ inputs and  $\ell$ outputs and let the model of the plant be given by
\begin{align}
P(p) &=\frac{N(p)}{d(p)} 
\end{align}
where $d(p)$ is a scalar polynomial of degree $n$, and $N(p)$ is an $\ell \times m$ matrix whose entries are polynomials in $p$ of degree at most $n-1$:
\begin{align}
d(p)  & = p^n+\sum_{k=0}^{n-1} d_{k} p^{k},&  d_k &\in \sR \\
\left( N(p)\right)_{ij} & = \sum_{k=0}^{n-1} N_{ij,k} p^{k}, & N_{ij,k} &\in \sR 
\end{align}
Define the parameter vector $\theta$ to include all coefficients in $d(p)$ and in every entry of $N(p)$:
\begin{multline}
\theta =[d_{n-1}, \cdots, d_{0}, N_{1 1, n-1}, \cdots, N_{11,0}, \\
               \cdots, N_{\ell m, n-1}, \cdots, N_{\ell m, 0}]^\top.
\end{multline}
The identification procedure is the same as before. 
Let
\begin{align}
u(t) &=[u_1(t), u_2(t), \cdots, u_m(t)]^\top, \\
y(t) &=[y_1(t), y_2(t), \cdots, y_{\ell}(t)]^\top
\end{align}
and calculate the filtered signal for each element as 
\begin{align}
u_{\mathrm{f},i} (t) &= F(p) u_i(t) ~~ (i=1, 2, \cdots, m) \\
y_{\mathrm{f},i} (t) & = F(p) y_i(t) ~~(i=1, 2, \cdots, \ell).
\end{align}
Defining 
\begin{align}
\uf(t) &= [u_{\mathrm{f},1}(t), u_{\mathrm{f},2}(t), \cdots, u_{\mathrm{f},m}(t)]^\top,  \\
\yf(t) &= [y_{\mathrm{f},1}(t), y_{\mathrm{f},2}(t), \cdots, y_{\mathrm{f},\ell}(t)]^\top, 
\end{align}
we have 
\begin{align}
\begin{split}
e(t) =\yf^{(n)}(t) +d_{n-1}\yf^{(n-1)}(t)+ \cdots + d_{0}\yf(t) \\
       - \{ N_{n-1} \uf^{(n-1)}(t) + \cdots + N_{0} \uf(t) \}
\end{split}
\end{align}
where
\begin{align}
\begin{split}
N_{k} := \begin{bmatrix}
N_{11,k} & N_{12,k} & \cdots & N_{1m,k} \\
N_{21,k}  & N_{22,k} & \cdots  &N_{2m,k}  \\
\vdots  &  \dots  & \ldots   & \vdots \\
N_{\ell 1, k} & N_{\ell 2, k} & \cdots & N_{\ell m, k}
\end{bmatrix}\\
(k=0,1, \cdots, n-1).
\end{split}
\end{align}
Let
\begin{align}
e(t)=[e_1(t), e_2(t), \cdots, e_{\ell}(t) ]^\top,
\end{align}
then each entry is given by
\begin{align}
\begin{split}
e_i(t) &= y_{\mathrm{f},i}^{(n)}(t)+[y_{\mathrm{f},i}^{(n-1)}(t), \cdots, y_{\mathrm{f},i}(t)][d_{(n-1)}, \cdots, d_0]^\top\\
        & - \sum_{j=1}^{m} [u_{\mathrm{f}, j}^{(n-1)}(t), \cdots, u_{\mathrm{f}, j}(t)][N_{ij,(n-1)}, \cdots, N_{ij,0}]^\top
\end{split}
\end{align}
Given the sampling time $t_k =kh$$(k=0,1, \cdots, N_h-1)$, we define   
\begin{align}
E_i  &:=
\begin{bmatrix}
e_i(t_0) \\
e_i(t_1) \\
\vdots  \\
e_i(t_{N_h-1})
\end{bmatrix},\quad
Y^{n}_{i}:=
\begin{bmatrix}
y_{\mathrm{f},i}^{(n)}(t_0) \\
y_{\mathrm{f},i}^{(n)}(t_1) \\
\vdots   \\
y_{\mathrm{f},i}^{(n)}(t_{N_h-1}) \\
\end{bmatrix},
\\
Y_{i} &:=
\begin{bmatrix}
y_{\mathrm{f},i}^{(n-1)}(t_0) & \cdots & y_{\mathrm{f},i}(t_0) \\
y_{\mathrm{f},i}^{(n-1)}(t_1) & \cdots & y_{\mathrm{f},i}(t_1) \\
\vdots  \\
y_{\mathrm{f},i}^{(n-1)}(t_{N_h-1}) & \cdots & y_{\mathrm{f},i}(t_{N_h-1}) \\
\end{bmatrix},
\\
U_{j}  &:=
\begin{bmatrix}
u_{\mathrm{f},j}^{(n-1)}(t_0) & \cdots & u_{\mathrm{f},j}(t_0) \\
u_{\mathrm{f},j}^{(n-1)}(t_1) & \cdots & u_{\mathrm{f},j}(t_1) \\
\vdots  \\
u_{\mathrm{f},j}^{(n-1)}(t_{N_h-1}) & \cdots & u_{\mathrm{f},j}(t_{N_h-1}) \\
\end{bmatrix},
\\
U &:=[U_1, U_2, \cdots, U_m]
\end{align}
then we have
\begin{align}
E_i  &= Y^n_i + Y_i [d_{n-1}, \cdots, d_0]^\top  \nonumber \\
      & ~~- \sum_{j=1}^{m} U_{j} [N_{ij,n-1}, \cdots, N_{ij,0}]^\top,
\end{align}
which can be represented by a matrix equation form by 
\begin{align}
\begin{bmatrix}
E_1 \\
E_2 \\
\vdots \\
E_{\ell}
\end{bmatrix}
=
\begin{bmatrix}
Y^n_1 \\
Y^n_2 \\
\vdots \\
Y^n_{\ell}
\end{bmatrix}
-
\begin{bmatrix}
-Y_1      & U      & 0        & \cdots  & 0 \\
-Y_2     &  0       & U       & 0        & 0   \\
\vdots & \vdots & 0       & \ddots & 0 \\
-Y_{\ell} & 0        & \cdots &  0       &U
\end{bmatrix}
\begin{bmatrix}
\theta_d \\
\theta_{N1} \\
\vdots \\
\theta_{N\ell}
\end{bmatrix}
\label{eq:Y-DFmimo}
\end{align}
with
\begin{align}
\theta_d &:=
\begin{bmatrix}
d_{n-1} & d_{n-2} & \cdots & d_0
\end{bmatrix}^\top
\in {\bf R}^{n}
\\
\theta_{Ni}
&:=
\begin{bmatrix}
[N_{i1, n-1},  N_{i1,n-2}, \cdots, N_{i1,0}]^\top \\
[N_{i2, n-1},  N_{i2,n-2}, \cdots, N_{i2,0}]^\top \\
\dots \\
[N_{im,n-1},  N_{im,n-2}, \cdots, N_{im,0}]^\top 
\end{bmatrix}
\in {\bf R}^{nm}
\end{align}
We can obtain the estimate $\hat{\theta}$ from \eqref{eq:LSsolution} by regarding \eqref{eq:Y-DFmimo} as
\eqref{eq:YA-DF}.

\subsection{Summary}

The key point is that the SVF method itself is simple and flexible, but it suffers from bias caused by both noise correlation and sample-and-hold errors at low sampling rates. 
By adopting a fast sampling strategy, we can eliminate these large errors, making the SVF method a practical tool for system identification. 
This allows us to identify plant models accurately without needing special techniques or detailed knowledge about the controller, even in challenging conditions.

In the next section, we present simulation results to demonstrate the effectiveness of using fast sampling with the SVF-like method and how it compares with conventional approaches.

\section{Numerical Experiments}\label{sec:num_exp}

In this section, we present a series of numerical experiments to demonstrate the effectiveness of:
\begin{enumerate}
    \item choosing a fast sampling interval beyond the conventional ``ten times the bandwidth'' rule, and 
    \item using the SVF-like identification method in that fast-sampling regime.
\end{enumerate}
We focus on closed-loop identification scenarios where the plant may be stable or unstable, and in some cases subject to constant offsets. 
All experiments illustrate how fast sampling reduces the noise effect and improves identification accuracy.

\subsection{Overview of the Experimental Setup}
We tested four plants: three SISO plants (\texttt{P1}, \texttt{P2}, \texttt{P3}) and one MIMO plant (\texttt{P4}). 
The details of these plants and their controllers are summarized in Table~\ref{tab:plants_controllers}. 

\begin{table*}[t]
\centering
\caption{Summary of Plants $P(s)$ and Controllers $K(s)$ Used in the Numerical Examples}
\label{tab:plants_controllers}
\renewcommand{\arraystretch}{1.1}
\begin{tabular}{@{}clll@{}}
\toprule
\textbf{System} & $P(s)$ & $K(s)$ &\textbf{Notable Features} \\
\midrule
\texttt{P1} 
& $\displaystyle \frac{1}{\,s - 1\,}$ 
& $\displaystyle \frac{3\,s + 7}{\,0.2\,s - 2\,}$ 
& 1st-order, unstable pole\\
\texttt{P1o} 
& Same as \texttt{P1}, but with offsets on noise
& Same as \texttt{P1}
& Tests offset handling \\
\texttt{P1f} 
& Same as \texttt{P1}, but noise-free
& Same as \texttt{P1}
& Tests the effects other than noise \\
\texttt{P2} 
& $\displaystyle \frac{s + 1}{\,s^2 \,+\, 0.5\,s \,+\, 1\,}$
& $\displaystyle \frac{0.5s-0.75}{s+1}$
& 2nd-order stable system \\
\texttt{P3} 
& $\displaystyle \frac{s - 1}{\,s^2 - 4\,}$
& $\displaystyle \frac{5.5s+11}{s-2.8}$
& 2nd-order with unstable pole and zero\\
\texttt{P4} 
& $\frac{\begin{bmatrix}s^2 + 1.4\,s + 4.4&1\\
3\,s^2 -3.8\,s +7.6 & s^2 + s -1
\end{bmatrix}}{s^3 - 0.6\,s^2 + 3.6\,s -4}$ 
& Observer-based regulator
& 2-input, 2-output, 3rd-order unstable system\\
\bottomrule
\end{tabular}
\end{table*}

In each experiment, the plant operates under feedback control with an unknown controller \(K(p)\) as in Fig.~\ref{fig:data_generating_system}, and we collect input-output data for a fixed duration \(\Tf=\SI{20}{s}\). 
Specifically:
\begin{itemize}
    \item \textbf{Sampling and Decimation:}
    We first simulate each system at a very fine step size (=\(\SI{5e-6}{s}\)) over the interval \([-10,20]\), and decimate the high-resolution data at various sampling intervals \(h\) (e.g., \(\SI{10}{ms}, \SI{1}{ms}, \SI{0.1}{ms}, \dots\)), effectively emulating different sampling frequencies while using the same dataset.
    
    For the SVF-like method, the decimated signal is held using a zero-order hold and then fed into a bank of filters; from the sampled outputs of these filters, $Y_A$ and $D_F$ are constructed.
    However, to eliminate the influence of initial conditions, the data corresponding to the time interval $[-10, 0)$ seconds are discarded.
    
    In contrast, the conventional method used for comparison is provided with the entire dataset covering the time interval $[-10, 20]$ seconds.
    This setup offers a somewhat favorable condition for the conventional approach.

    \item \textbf{Excitation Signals:} 
    We inject square-wave signals with unit amplitude and random phase into the reference inputs \(r_u(t)\) and \(r_y(t)\), which are not known for identification procedure. 
    This ensures sufficient excitation for identification.
    \item \textbf{Noise and Offsets:}
    In our simulations, both the input disturbance $w(t)$ and the measurement noise $\eta(t)$ are generated by sampling and holding the white Gaussian noise with standard deviation \(0.1\) in every \(\SI{5e-6}{s}\).
    For the offset case (\texttt{P1o}), an additional constant offset of \(\bar{w}=10\) is added to the input and \(\bar{\eta}=1\) to the output.
    For MIMO setting (\texttt{P4}), noise on each variable is independent from the others.

    \item \textbf{Metric for Comparison:} 
    Commonly used norms (e.g., $H_2$ or $H_\infty$ norms) are generally not finite for unstable systems, making it difficult to reliably quantify the discrepancy between the unstable true system and its model.
    To overcome this limitation, we use the Vinnicombe ($\nu$-gap) metric\cite{Vinnicombe:1993}.
    Briefly, the $\nu$-gap metric quantifies the distance between two systems via their normalized co-prime factorizations and is bounded between $0$ and $1$. 

    The $\nu$-gap metric possesses excellent properties for evaluating models obtained via closed-loop identification.
    In particular, $\delta_\nu(\hat{P},\Ps)$ corresponds to the degree of robustness required by a controller --- designed based on $\hat{P}$ --- to stabilize the actual system $\Ps$.
    Hence, it serves as a good indicator of the model's usefulness in control system design.
\end{itemize}

\subsection{SVF-like Continuous-Time Identification}
As discussed in Section~\ref{sec:svf}, we apply an SVF-like procedure. 
Specifically, for each sampled dataset \(\{u(kh),\,y(kh)\}\), we compute
\begin{align}
  \uf^{(\ell)}(t) &= p^\ell F(p)\,\bar{u}(t),
  &
  \yf^{(\ell)}(t) &= p^\ell F(p)\,\bar{y}(t),\\
  \bar{u}(t) &:= u(\lceil t/h \rceil\cdot h), & \bar{y}(t) &:= y(\lceil t/h \rceil\cdot h).
\end{align}
We then form the linear regression matrix and perform least squares to solve for the continuous-time parameters \(\hat{\theta}\). 

\subsubsection{Choice of the Filter \texorpdfstring{$F(s)$}{F(s)}}
For \texttt{P1-3}, the filter is
\begin{align}
  F(s) \;=\; \frac{s}{(s+1)\,\bigl(s^2 + 1.8\,s + 1\bigr)},
  \label{eq:F}
\end{align}
This yields proper operators \(p^\ell F(p)\) for $\ell\le2$.

The factor \(s\) in the numerator of \(F(s)\) is to nullify the offset in steady-state for the case offsets or constant biases are present.

\subsubsection{Comparison with Discrete-Time Methods}
For comparison, we also identify each system using a conventional discrete-time ARX (or SSARX) method on the same decimated datasets. 
We adopt the commands \texttt{arx} or \texttt{n4sid} with SSARX option in MATLAB System Identification Toolbox.
These methods generally require a model with higher order than that of the target system. 
Here, we set the order to ten, which produced relatively good results.

This comparison illustrates how traditional sampling frequency rule works for discrete-time methods and is not applicable for the continuous-time SVF-like method.

\subsection{Representative Setting (P1)}
In the representative problem setting (\texttt{P1}), the plant is given by 
\[
   P^\star(s) \;=\; \frac{1}{s - 1},
\]
operating in the closed loop with
\[
   K(s) = \frac{3\,s + 7}{0.2\,s - 2}.
\]
White noise is added to both input and output, and we also consider a biased version (\texttt{P1o}) with nonzero offsets on the noise and a noise-free version (\texttt{P1f}).
We collect \(\Tf=\SI{20}{s}\) of data and test sampling intervals \(h\) from \(\SI{100}{ms}\) (i.e.\ \SI{10}{Hz}) down to \(\SI{0.01}{ms}\) (i.e.\ \SI{100}{kHz}).

\paragraph{I/O Data Example}
The blue lines in Fig.~\ref{fig:io-example-p1} show an example of the measured input and output for \texttt{P1} with \(h = \SI{10}{ms}\). 
In the figure, red lines show the results without noise. 

\begin{figure}[!t]
\centering
\includegraphics[scale=1]{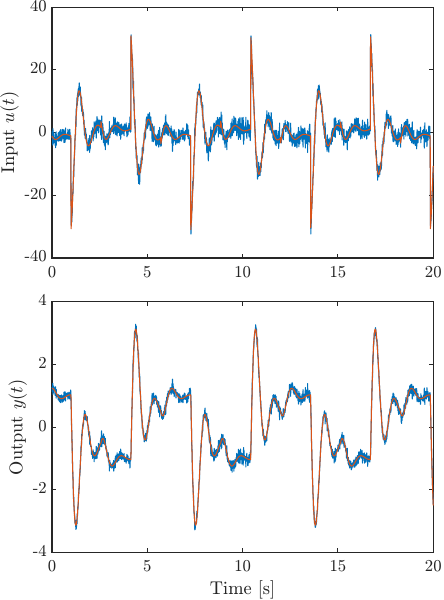}
\caption{Example I/O data for \texttt{P1} at sampling interval \(h = \SI{10}{ms}\). 
Offsets/biases are zero in this example.}
\label{fig:io-example-p1}
\end{figure}

\paragraph{Bode Diagram Comparisons}
Figure~\ref{fig:bode-p1} compares 50 identified models (from 50 different noise and random excitation phase realizations) for each sampling rate:   
we show the magnitude and phase of identified models obtained by 
using the SVF-like method and discrete-time ARX model identification. 

Increasing the sampling frequency improves the accuracy of both methods up to a certain degree (about \SI{10}{ms} in this example), but excessive increases cause numerical problems for discrete-time model-based methods. 
On the other hand, the SVF-like method can benefit from an increase in the number of samples to any extent.

This trend is common to all examples, and while there is a sweet spot in the choice of sampling frequency for discrete-time model-based methods, as has been conventionally known, and an unlimited increase in sampling frequency does yield poor results. This is not the case for continuous-time model-based methods.
\begin{figure*}[tbhp]
\centering
 \begin{tabular}{|>{\centering\arraybackslash}m{3ex}|>{\centering\arraybackslash}m{7.5cm}|>{\centering\arraybackslash}m{7.5cm}|}
 \hline
\rule{0pt}{3ex}& \textbf{SVF-like} & \textbf{Discrete-Time ARX} \\ \hline
\rule[-2.8cm]{0pt}{5.6cm}\raisebox{-.5\height}{\rotatebox{90}{$h=\SI{100}{ms}$}}& 
\raisebox{-2.8cm}{\includegraphics[scale=1]{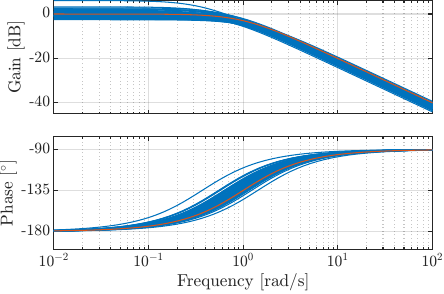}} &
\raisebox{-2.8cm}{\includegraphics[scale=1]{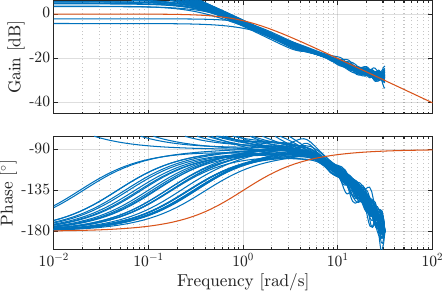}}\\ \hline
\rule[-2.8cm]{0pt}{5.6cm}\raisebox{-.5\height}{\rotatebox{90}{$h=\SI{10}{ms}$}}&
\raisebox{-2.8cm}{\includegraphics[scale=1]{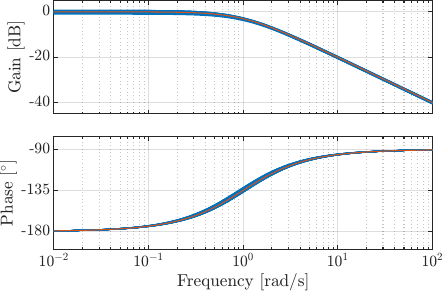}} &
\raisebox{-2.8cm}{\includegraphics[scale=1]{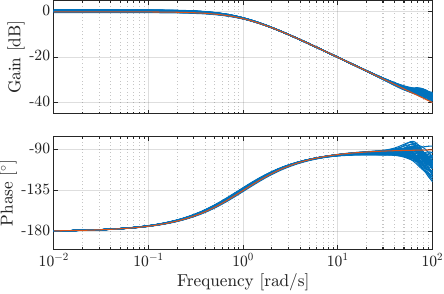}}\\ \hline
\rule[-2.8cm]{0pt}{5.6cm}\raisebox{-.5\height}{\rotatebox{90}{$h=\SI{1}{ms}$}}&
\raisebox{-2.8cm}{\includegraphics[scale=1]{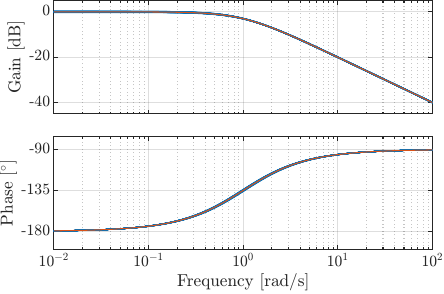}} &
\raisebox{-2.8cm}{\includegraphics[scale=1]{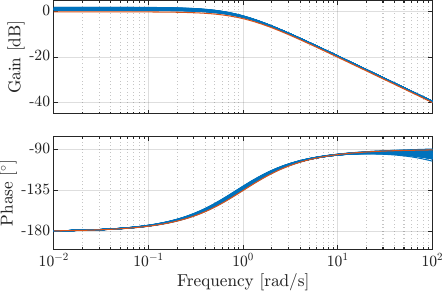}}\\ \hline
\rule[-2.8cm]{0pt}{5.6cm}\raisebox{-.5\height}{\rotatebox{90}{$h=\SI{0.1}{ms}$}}&
\raisebox{-2.8cm}{\includegraphics[scale=1]{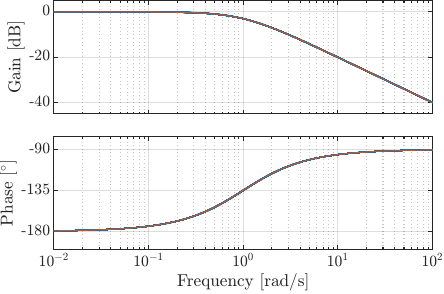}} &
\raisebox{-2.8cm}{\includegraphics[scale=1]{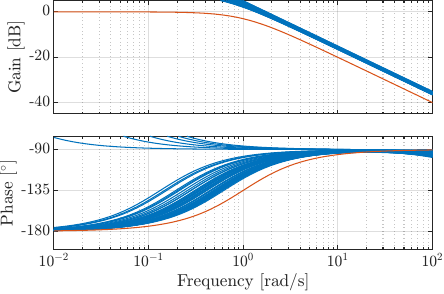}}\\\hline
\end{tabular}
\caption{Bode magnitude/phase of 50 identified models for \texttt{P1}. 
Blue lines: identified models; red line: true plant.}
\label{fig:bode-p1}
\end{figure*}

\paragraph{Convergence}

Also, statistics of parameter estimates in \texttt{P1} for various sampling intervals are summarized as box plots in 
Fig.~\ref{fig:boxplot-p1}, where $n_0$ and $d_0$ are coefficients of the numerator and the denominator of the plant model, respectively.
\begin{figure}
    \centering
    \includegraphics[scale=1]{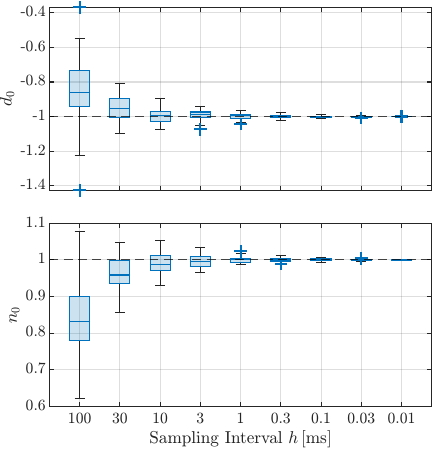}
    \caption{Statistics of parameter estimates in \texttt{P1} vs. sampling interval}
    \label{fig:boxplot-p1}
\end{figure}
In the figure, true values for each parameter is shown as dashed line, and the statistics of the estimates are shown as box plots.
As seen in the figure, the bias and variance of the parameter estimates vanish as $h \to 0$.

\paragraph{Effect of Offsets (\texttt{P1o})}
We also tested a variant where a constant bias is added to the noise signals. 
In this case, a constant disturbance is added and the input/output data are as shown in Fig.~\ref{fig:io-example-p1o}, but the other conditions are identical to \texttt{P1}. 
Fig.~\ref{fig:bode-p1o} shows the Bode diagram of the model obtained with a sampling frequency of $h=\SI{10}{ms}$, which is in the sweet spot. As can be seen from the figure, the usual identification method cannot handle unknown disturbances, while the SVF-like method can easily handle this type of problem by utilizing degrees of freedom in the filter.
\begin{figure}[!t]
\centering
\includegraphics[scale=1]{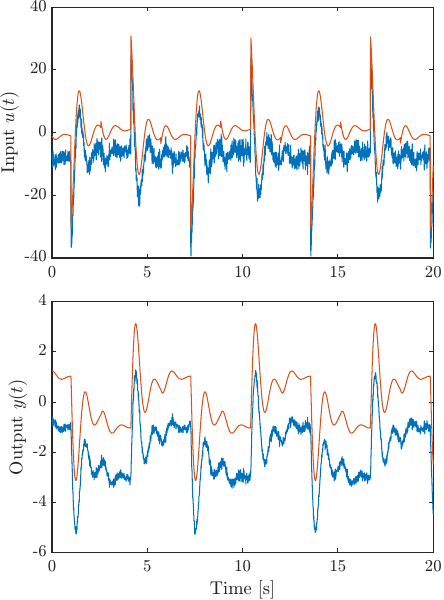}
\caption{Example I/O data for \texttt{P1o} at sampling interval \(h = \SI{10}{ms}\). 
Offsets/biases are $\bar{w}=10$ and $\bar{\eta}=1$ in this example.}
\label{fig:io-example-p1o}
\end{figure}

\begin{figure*}[t]
\centering
 \begin{tabular}{|>{\centering\arraybackslash}m{3ex}|>{\centering\arraybackslash}m{7.5cm}|>{\centering\arraybackslash}m{7.5cm}|}
 \hline
\rule{0pt}{3ex}& \textbf{SVF-like} & \textbf{Discrete-Time ARX} \\ \hline
\rule[-2.8cm]{0pt}{5.6cm}\raisebox{-.5\height}{\rotatebox{90}{$h=\SI{10}{ms}$}}&
\raisebox{-2.8cm}{\includegraphics[scale=1]{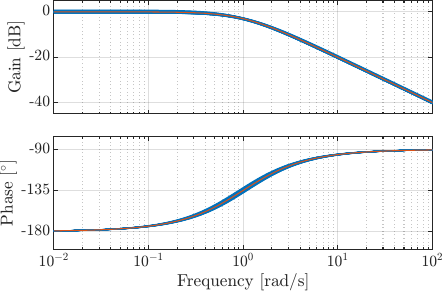}} &
\raisebox{-2.8cm}{\includegraphics[scale=1]{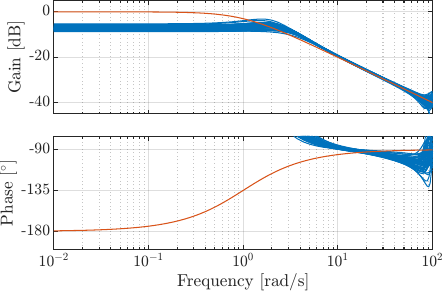}}\\ \hline
\end{tabular}
\caption{Bode plot of 50 identified models for the problem with offset (\texttt{P1o}).}
\label{fig:bode-p1o}
\end{figure*}

\subsection{MIMO Example (\texttt{P4})}
The plant in \texttt{P4} is a two-input/two-output unstable system stabilized by an observer-based controller, which places observer poles at $1.1\times\left(-1, -0.9 \pm \sqrt{0.9^2 - 1}\right)$ and state feedback poles at  $0.8\times\left(-1, -0.9 \pm \sqrt{0.9^2 - 1}\right)$.
Since \texttt{P4} requires the proper operator for $\ell=3$, the filter is changed to
\begin{align}
  F(s) \;=\; \frac{1}{(s+1)^2\left\{\left(s+0.2\right)^2 + 1.99^2\right\}}.
  \label{eq:F3}
\end{align}

Fig.~\ref{fig:mimo-bode} shows the Bode plot of the models obtained by the SVF-like method with fast-sampling $h=\SI{0.1}{ms}$. 
As seen in the figure, fast-sampling and simple SVF-like approach produce appropriate models even for a tough problem with the unstable MIMO system.

\begin{figure*}[t]
\centering
\includegraphics[scale=1]{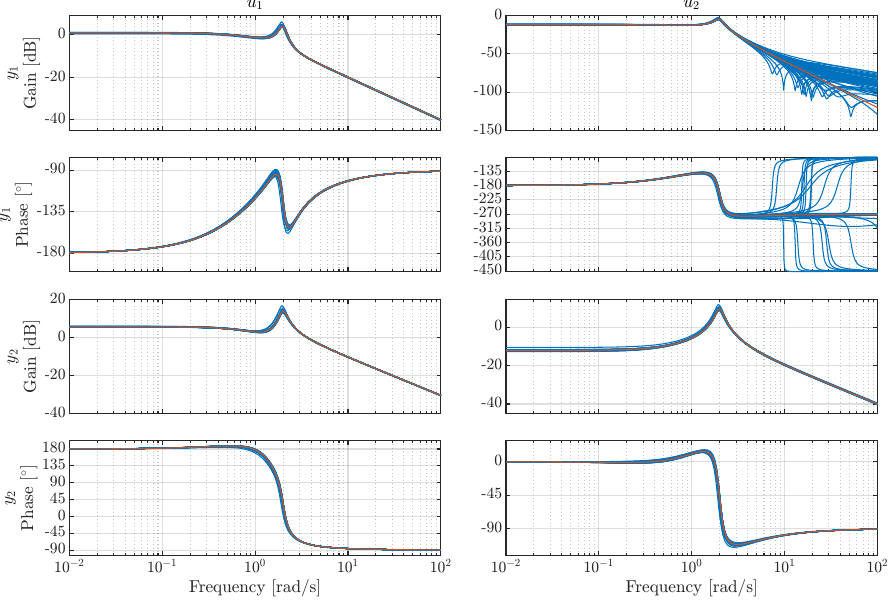}\\
\caption{Bode plot of 50 identified models \texttt{P4} at \(h=\SI{0.1}{ms}\) with SVF-like method. }
\label{fig:mimo-bode}
\end{figure*}

\subsection{Convergence and Variance Trends}
Fig.~\ref{fig:id-converge} summarizes the overall estimation error as a function of the sampling interval \(h\). Specifically, we plot
\[
  \frac{\|\hat{\theta}-\ths\|^2}{\|\ths\|^2}
\]
averaged over 50 noise and excitation signal realizations, with different settings. 
\texttt{P2} and \texttt{P3} are additional examples for investigating the generality of the result, and apart from the differences in the plants and controllers shown in Table~\ref{tab:plants_controllers}, there are no differences from \texttt{P1}.

In all cases, the convergence of the mean-square error of the parameters is generally in the range $O(h)$ to $O(h^2)$, and in the region where $h$ is relatively large, the $O(h^2)$ error based on sample holding becomes more dominant, while in the region where $h$ is sufficiently small, the $O(h)$ error due to folded noise becomes dominant.
The results \texttt{P1f}, which are based on noise-free measurements, show convergence of approximately $O(h^2)$ over the entire range, suggesting that the $O(h)$ factor is indeed due to aliasing noise.

\begin{figure}[t]
\centering
\includegraphics[scale=1]{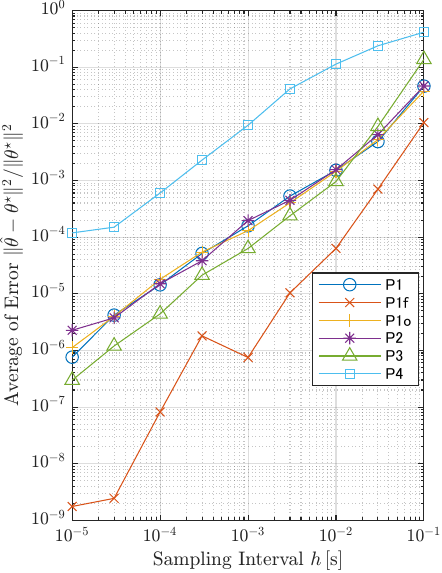}
\caption{Average normalized parameter error 
\(\|\hat{\theta}-\ths\|^2 / \|\ths\|^2\)
vs.\ sampling interval \(h\). }
\label{fig:id-converge}
\end{figure}

The results show that it is worth considering speeding up the sampling in system identification based on continuous-time models.

\begin{remark}
One might argue that shorter $h$ increases the number of I/O data and 
consequently improves the accuracy of the identified models.
However, this reasoning is incorrect. 
Longer data may improve the variance of identified models but does not improve the bias error. 
Shorter sampling interval is the essential factor contributing to improved accuracy 
(i.e., both smaller bias and smaller variance).
\end{remark}
\begin{remark}
It should be stressed that it is enough for us to solve just one least squares problem in the SVF-like method, and any special techniques or iterative procedures are not necessary at all. Hence, this identification method is quite easy to use for non-expert users. Nevertheless, it provides us a very accurate plant model no matter how heavy the measurement noise is, as long as faster sampling is available.
\end{remark}

\begin{figure*}[t]
\centering
\begin{tabular}{ccc}
\includegraphics[scale=1]{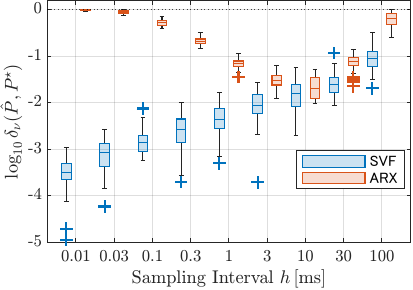}&
\includegraphics[scale=1]{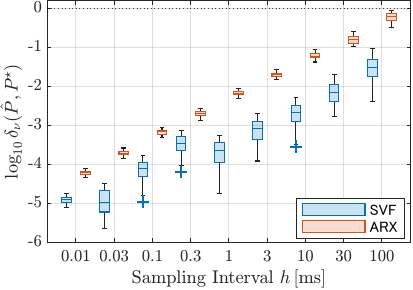}\\
(a) \texttt{P1}&
(b) \texttt{P1f}\\[3mm]
\includegraphics[scale=1]{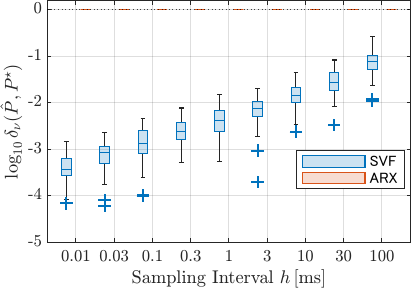}&
\includegraphics[scale=1]{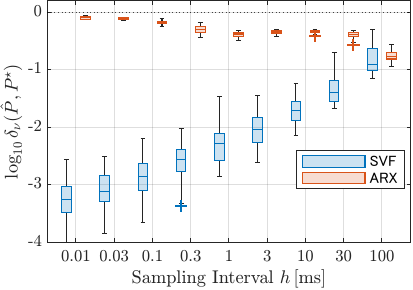}\\
(c) \texttt{P1o}&
(d) \texttt{P2}\\[3mm]
\includegraphics[scale=1]{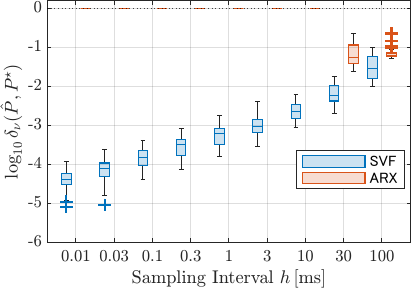}&
\includegraphics[scale=1]{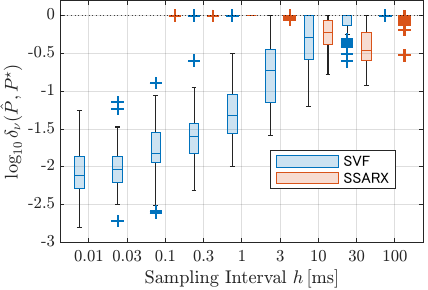}\\
(e) \texttt{P3}&
(f) \texttt{P4}
\end{tabular}
\caption{Distribution of the $\nu$-gap metric between the true plants and identified models across various sampling intervals ($h$).
Note that for the MIMO system \texttt{P4} (f), SSARX results are omitted for $h < \SI{0.1}{ms}$ due to the prohibitive computational cost at these high sampling frequencies.}
\label{fig:gap_metric}
\end{figure*}

\subsection{Comparison with methods based on discrete-time models}
Finally, Fig.~\ref{fig:gap_metric} shows the $\nu$-gap metric between the true plants and identified models for various sampling intervals. 
In the figure, distributions of the common logarithm of the $\nu$-gap are shown in box plots. 
The dotted line indicates $\delta_\nu(\hat{P},\Ps)=1$, which means that the model $\hat{P}$ is significantly different from the true plant $\Ps$ and that the stabilizing compensator for the model does not stabilize the true plant. 
Conversely, $\delta_\nu(\hat{P},\Ps)=0$ means that the model and the plant are perfectly matched.
The results confirm two key observations. 
First, SVF-like methods consistently improve as sampling becomes faster, with the $\nu$-gap metric approaching zero as $h$ decreases. Second, discrete-time models (ARX/SSARX) show the well-known performance degradation when sampling becomes too fast. 
These methods initially improve with faster sampling but then deteriorate as $h$ continues to decrease. 
This confirms the conventional sampling frequency selection rule for discrete-time methods and highlights the advantage of SVF-like approaches with fast sampling.

\section{Conclusions}\label{sec:conclusions}
In this paper, we revisited the conventional rule of choosing the sampling frequency to be ``ten times the bandwidth'' and showed that, under modern hardware capabilities, much faster sampling can substantially improve system identification accuracy.

We provided a theoretical basis for this claim by analyzing how the variance of the noise can be reduced via appropriate low-pass filters.
Based on this fact, it is shown that when parameters are estimated via the SVF-like approach using least squares, the variance of the estimation error scales as $O(h)$ with respect to the sampling interval $h$.
In addition, this scaling property is shown to be valid even with the presence of colored noise or the noise correlations between variables.

Thus, increasing the sampling frequency and applying the SVF-like method offers a novel solution for difficult problems such as closed-loop system identification.
To validate the universal effectiveness of the fast sampling approach, the numerical experiments including the tough situations, that is, closed-loop setting, unstable plant, offset in measurement, multi-input multi-output (MIMO) systems, are given. 

Although the scaling law presented here, where the variance of parameter estimates decreases proportionally to the sampling interval ($O(h)$), is expected to be applicable to a broader range of identification schemes and models, detailed analysis is left for future research.

\section*{References}
\bibliographystyle{IEEEtran}
\bibliography{refs}

\begin{IEEEbiography}[{\includegraphics[width=1in,height=1.25in,clip,keepaspectratio]{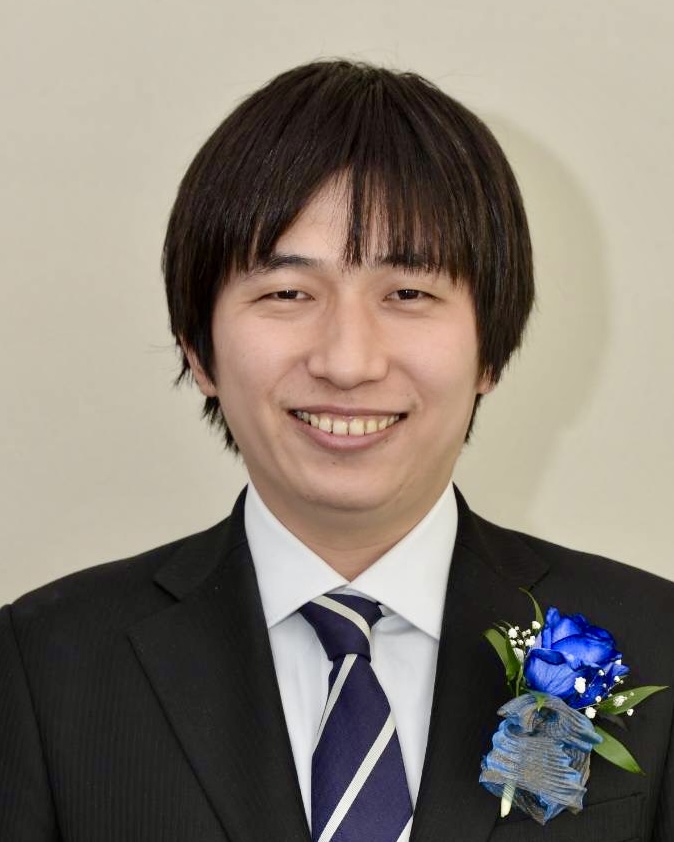}}]{Ichiro Maruta} (Member, IEEE) received the Bachelor of Engineering, Master of Informatics, and Doctor of Informatics degrees from Kyoto University, Kyoto, Japan, in 2006, 2008, and 2011, respectively. He was a Research Fellow with the Japan Society for the Promotion of Science, from 2008 to 2011. From 2012 to 2017, he was an Assistant Professor with the Graduate School of Informatics, Kyoto University. In 2017, he joined the Department of Aeronautics and Astronautics, Graduate School of Engineering, Kyoto University, as a Lecturer, where he has been an Associate Professor, since 2019.
\end{IEEEbiography}

\begin{IEEEbiography}[{\includegraphics[width=1in,height=1.25in,clip,keepaspectratio]{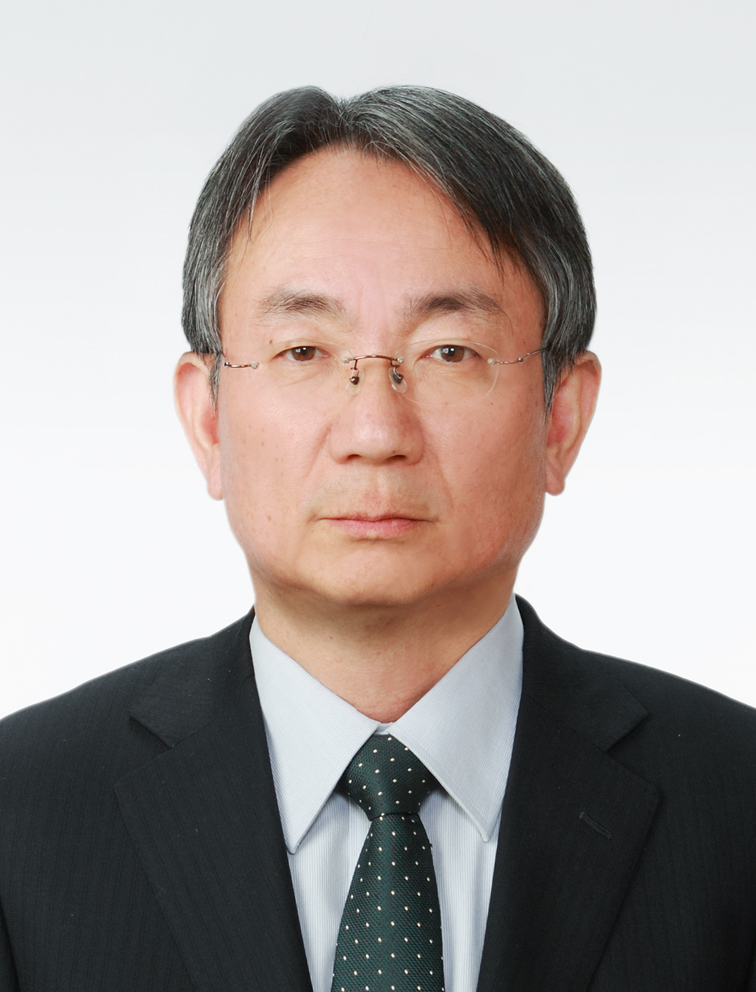}}]{Toshiharu Sugie} (Life Fellow, IEEE) received the B.E., M.E., and Ph.D. degrees in engineering from Kyoto University, Japan, in 1976, 1978, and 1985, respectively. From 1978 to 1980, he was a Research Member of the Musashino Electrical Communication Laboratory, NTT, Musashino, Japan. From 1984 to 1988, he was a Research Associate with the Department of Mechanical Engineering, University of Osaka Prefecture, Osaka. From 1988 to 2019, he worked at Kyoto University, where he has been a Professor with the Department of Systems Science, since 1997. In 2019, he joined Osaka University as a Research Member of the Komatsu MIRAI Construction Equipment Cooperative Research Center, Graduate School of Engineering. His research interests include robust control, identification for control, and control application to mechanical systems. He served as an Editor for Automatica. He was also an Associate Editor of Asian Journal of Control and the International Journal of Systems Science.
\end{IEEEbiography}

\end{document}